\newtheorem{corollary}{Corollary}
\newtheorem{example}{Example}
\newtheorem{theorem}{Theorem}[section] 
\newtheorem{definition}{Definition} 
\newtheorem{result}{Main Result}
\def\EE{{\mathbb{E}}}
\def\PP{{\mathbb{P}}}
\newcommand{\fall}{\,\forall\,}
\newcommand{\G}{{\cal G}}
\newcommand{\M}{{\cal M}}
\newcommand{\N}{{\cal N}}
\DeclareMathOperator*{\argmax}{arg\,max}
\title{Re-incentivizing Discovery: Mechanisms for Partial-Progress Sharing in
Research\footnote{A shorter version of this paper appeared in ACM EC 2014 \cite{banerjee2014re}}}
\author{Siddhartha Banerjee\footnote{ORIE, Cornell University (was with MS\&E, Stanford University, during the time of this work)}
\and
Ashish Goel\footnote{Management Science and Engineering, Stanford University}
\and
Anilesh Kollagunta Krishnaswamy\footnote{Electrical Engineering, Stanford University}}
\begin{document}

\maketitle
\begin{doublespacing}
\begin{abstract}
An essential primitive for an efficient research ecosystem is
\emph{partial-progress sharing} (PPS) -- whereby a researcher shares information
immediately upon making a breakthrough. This helps prevent duplication of work;
however there is evidence that existing reward structures in research discourage
partial-progress sharing. Ensuring PPS is especially important for new online
collaborative-research platforms, which involve many researchers working on
large, multi-stage problems.

We study the problem of incentivizing information-sharing in research, under a
stylized model: non-identical agents work independently on subtasks of a large
project, with dependencies between subtasks captured via an acyclic
subtask-network. Each subtask carries a reward, given to the first agent who
publicly shares its solution. Agents can choose which subtasks to work on, and
more importantly, when to reveal solutions to completed subtasks. Under this
model, we uncover the strategic rationale behind certain anecdotal phenomena.
Moreover, for any acyclic subtask-network, and under a general model of
agent-subtask completion times, we give sufficient conditions that ensure PPS is
incentive-compatible for all agents. 

One surprising finding is that rewards which are approximately proportional to
perceived task-difficulties are sufficient to ensure PPS in all acyclic
subtask-networks. The fact that there is no tension between local fairness and
global information-sharing in multi-stage projects is encouraging, as it
suggests practical mechanisms for real-world settings. Finally, we show that PPS is necessary, and
in many cases, sufficient, to ensure a high rate of progress in research.
\end{abstract}

\section{Introduction}
\label{sec:intro}

Academic research has changed greatly since the Philosophical Transactions of
the Royal Society were first published 
in $1665$. However, the process of disseminating research has remained largely
unchanged, until very recently. In particular, journal publications have
remained to a large extent the dominant mode of research dissemination. However,
there is growing belief among researchers that existing systems are inefficient
in promoting collaboration and open information-sharing among researchers
\cite{Nosek2012a}. This is evident in the increasing use of platforms such as
ArXiv, StackExchange, the Polymath project, etc., for collaboration and research
dissemination \cite{Nielsen2011}.

Open information sharing in research \emph{reduces inefficiencies due to
researchers duplicating each other's work}. A researcher who has made a
breakthrough in a part of a large problem can help speed up the overall rate of
research by sharing this partial-progress -- however withholding this
information is often in her self-interest. Partial-progress sharing forms the
basis of new online platforms for collaborative research like the \emph{Polymath
project} \cite{polymath}, where mathematicians work on open problems with
discussions occurring online in full public view. Polymath and similar
platforms, though successful, are currently small, and participants often do not
receive any formal credit for their contributions. However mechanisms for
allocating credit for partial-progress are increasingly being recognized as an
important primitive for scaling such systems. For example, the results of the
first polymath were published under a generic pseudonym -- subsequently, the
Polymath wiki started publishing 
details regarding individual contributions, with grant acknowledgements. 

Incentivizing \emph{partial-progress sharing} in research forms the backdrop for
our work in this paper. In particular, we formally justify the following
assertions.
\begin{itemize}[nolistsep,noitemsep]
	\item[$\bullet$]\emph{Partial-progress sharing is critical for achieving
the optimal rate of progress in projects with many researchers.}
	\item[$\bullet$]\emph{It is possible to make partial-progress sharing incentive-compatible for all
researchers by carefully designing the rewards on the subtasks of a large project.}
\end{itemize}
Though these assertions appear intuitive, there are few formal ways of reasoning
about them. We develop a model of research dynamics with which we can study
these questions. 

\subsection{A Model for Strategic Behavior in Collaborative Projects}
\label{ssec:model}

We develop a stylized model for dynamics and information-sharing in research.
For brevity, we refer to it as the \emph{Treasure-hunt game}, as it is
reminiscent of a treasure-hunt where agents solve a series of clues towards some
final objective, and solving certain sets of clues `unlocks' new clues.

We consider $n$ non-identical agents working on a large project, subdivided into
$m$ inter-dependent subtasks. The dependencies between subtasks are captured via
a \emph{subtask-network} -- a directed acyclic graph, where edges represent
subtasks. A subtask $u$ becomes \emph{available} to an agent once she knows the
solution to \emph{all} the predecessor subtasks -- those from which subtask $u$
is reachable in the subtask-network. The entire project is completed once all
subtasks are solved. 

\begin{figure}[ht!]
\centering
        \begin{subfigure}[b]{\textwidth}
        \centering
                \includegraphics[scale=0.8]{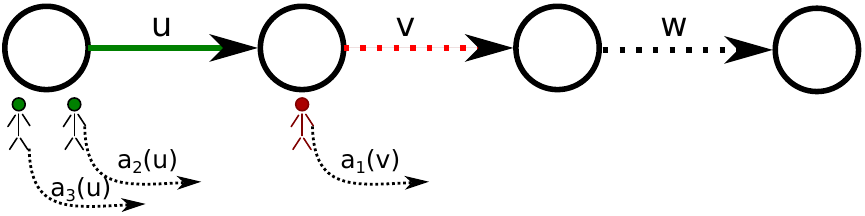}
                \caption{The Treasure-hunt game on a line}
        \end{subfigure}   
        \begin{subfigure}[b]{\textwidth}
        \centering
        \vspace{20pt}
                \includegraphics[scale=0.8]{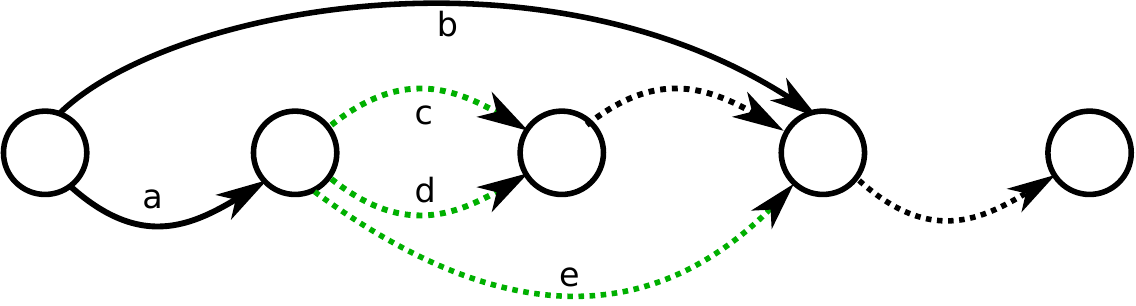}
                \caption{More complex acyclic subtask-network}
        \end{subfigure}
\caption[Treasure-Hunt Game]
{Examples of the Treasure-Hunt game: (a) Linear subtask-network -- Agent $1$ has solved (but not shared) subtask $u$ and can start subtask $v$. The rest must first finish subtask $u$. (b) More complex acyclic subtask-network. Solid
subtasks ($a,b$) are completed, while green subtasks ($c,d,e$) are available for
solving.}
\label{fig:thunt}
\end{figure}

For example, a project consisting of subtasks arranged in a line requires agents
to execute the subtasks in order starting from the first, and moving to the next
subtask only after completing the preceding one. On the other hand, a network
consisting of two nodes with parallel edges between them represents a project
with complementary subtasks -- each can be solved independently, but all
subtasks need to be solved to complete the project. More examples are given in
figure \ref{fig:thunt}.

We assume that \emph{agents work on subtasks independently}. If agent $i$
focusses her complete attention on subtask $u$, then she solves it after a
random time interval, drawn from an exponential distribution $Exp(a_i(u))$ --
the rate parameter $a_i(u)$ corresponds to the aptitude of agent $i$ for task
$u$. While most of our results are derived for a general aptitude matrix, we
obtain sharper results in the case of separable-aptitudes (SA):

\begin{definition}[\textbf{Separable-Aptitudes (SA) Model}]
\label{def:multiapt}
The agent-task aptitudes decompose as $a_i(u)=a_i\cdot s_u$, where:
\begin{itemize}[nolistsep,noitemsep] 
\item[$\bullet$] $a_i$ is the innate \emph{ability} of agent $i$, which is
independent of the tasks.
\item[$\bullet$] $s_u$ is the \emph{simplicity} of task $u$, which is
independent of the agents.
\end{itemize}
\end{definition}

\noindent We assume that agents know the complete aptitude matrix -- however, as
we discuss later, our results hold under much weaker assumptions. If an agent
has more than one subtask available for solving, then we assume she can split
her effort between them. 

Subtasks are associated with rewards, given to the first agent to \emph{publicly
share} its solution -- this captures the standard norm in academia, sometimes
referred to as the \emph{Priority Rule} \cite{Strevens2003}. We assume that the
\emph{rewards are fixed exogenously}. 

Upon solving a subtask, an agent can choose when to share the solution. Sharing
without delay helps speed up the overall rate of research; we formalize this as
follows:

\begin{definition}[\textbf{Partial-Progress Sharing (PPS) Policy}]
\label{def:pps}
An agent $i$ is said to follow the \emph{partial-progress sharing} policy if,
upon solving a subtask, she immediately shares the solution with all other
agents. 
\end{definition}

\noindent Agents can alternately defer sharing solutions to an arbitrary future
time, and continue working on downstream subtasks, while others remain stuck
behind. For example, an agent may choose to share her partial-progress after a
fixed delay, or until she solves the next subtask and can claim rewards for
multiple subtasks together. 
 
 The Treasure-hunt game captures the strategic behavior of agents with respect to
 sharing information. The strategy space of agents in the game consists of: $(i)$
 the delay in announcing the solution to a subtask, and $(ii)$ choice of subtask
 when multiple are available. On the other hand, for the social planner, there
 are two natural objectives:
 \begin{enumerate}[nolistsep,noitemsep]
 \item To \emph{ensure partial-progress sharing} -- choosing reward-vectors in a
 way that agents are incentivized to publicly share solutions without any delay.
 \item To \emph{minimize the makespan} -- ensuring the project is completed in
 the least time. 
 \end{enumerate}

\subsection{Two Warmup Examples}
\label{ssec:examples}

To understand the above objectives, we first present two examples. Our first
objective is to allocate rewards to incentivize PPS -- the following example
shows that even in simple subtask-network, the rewards affect agent strategies
in a non-trivial way:

\begin{example}
\label{eg:thunt}
Given a linear subtask-network under the SA model, with two subtasks
$\{p,q\}$ ($p$ preceding $q$) with simplicities $\{2s,s\}$, and two agents
$\{1,2\}$ with abilities $\{a,a\}$. If the rewards satisfy $4R_p>R_q$, then
for both agents, following PPS is a dominant strategy. Else, if the
rewards satisfy $4R_p<R_q$, then for both agents \emph{withholding
partial-progress} is a dominant strategy.
\end{example}

\begin{proof}

Suppose the strategy space of an agent $i$ is $S_i=\{P,\overline{P}\}$, where
$P$ corresponds to agent $i$ following PPS, and $\overline{P}$ to agent $i$
sharing solutions \emph{only after solving both} tasks\footnote{Though
sufficient for this example, we note that this is a simplified strategy space --
more generally, after solving an open subtask, an agent can choose to share the
solution after any arbitrary delay.}. For each $(x,y) \in S_1 \times S_2$, let
$U^i_{s_1,s_2}$ to be the payoff to agent $i \in \{1,2\}$ with agent $1$ opting
for strategy $s_1$ and agent $2$ opting for $s_2$. By symmetry (because both
agents are identical), we have
$U^1_{P,P}=U^2_{P,P}=U^1_{\overline{P},\overline{P}}=U^2_{\overline{P}}$,
$U^1_{P,\overline{P}}=U^2_{\overline{P},P}$ and
$~U^1_{\overline{P},P}=U^2_{P,\overline{P}}$. Further, since it is a constant
sum game with total reward $R_p+R_q$, we have:
$U^1_{P,P}+U^2_{P,P}=U^1_{P,\overline{P}}+U^2_{\overline{P},\overline{P}}=U^1_{P
,\overline{P}}+U^2_{P,\overline{P}}=U^1_{\overline{P},P}+U^2_{\overline{P},P}
=R_p+R_q$.
\noindent Furthermore, considering the strategy profile $(\overline{P},P)$, we
have: 
\begin{align*}
U^1_{\overline{P},P}&= \PP\left[\mbox{$1$ solves $p$
first}\right]\cdot\PP\left[\mbox{$1$ solves $q$ before $2$ claims $p$}|\mbox{$1$
solved $p$ first}\right]\cdot(R_p + R_q) \\
&+\PP\left[\mbox{$1$ solves $p$ first}\right]\cdot\PP\left[\mbox{$2$ claims $p$
before $1$ solves q}|\mbox{$1$ solved $p$
first}\right]\cdot\left(\frac{R_q}{2}\right) \\
&+ \PP\left[\mbox{$2$ solves $p$ first}\right]\cdot\left(\frac{R_q}{2}\right)\\
&= \frac{1}{2}\cdot\frac{sa}{sa + 2sa}\cdot(R_p + R_q) + \frac{2sa}{sa +
2sa}\cdot\frac{R_q}{2}+ \frac{1}{2}\cdot\frac{R_q}{2}= \frac{2R_p+7R_q}{12}.
\end{align*}

Now we have $\frac{2R_p+7R_q}{12}>\frac{R_p+R_q}{2} \iff 4R_p < R_q$: thus, if
$4R_p<R_q$, then $U^1_{\overline{P},P}>U^1_{P,P}$, and by symmetry,
$U^2_{P,\overline{P}}>U^2_{P,P}$. Further, since it is a constant sum game, the
latter inequality implies
$U^1_{P,\overline{P}}<U^1_{\overline{P},\overline{P}}$ (and
$U^2_{\overline{P},P}<U^2_{\overline{P},\overline{P}}$). Thus $\overline{P}$, a
non-PPS strategy, is a dominant strategy for both players On the other hand, if
$4R_p > R_q$, a similar argument shows that $PPS$ is a dominant strategy for
both players. 
\end{proof}

Thus, it is non-trivial to ensure PPS. On the other hand, the following example
suggests it is necessary for efficient research, by showing that the makespan in
the absence of PPS can be greater by a factor proportional to the number of
agents:

\begin{example}
\label{eg:poanopps}
Consider a treasure-hunt game under the SA model on a linear subtask network,
with $m\sim\Omega(\log n)$ subtasks, each with simplicity $1$, being attempted
by $n$ agents, each having ability $1$. Suppose all agents agree to follow PPS
-- then, the expected makespan is $\EE[T_{PPS}]=m/n.$

On the other hand, suppose the rewards are distributed so that all subtasks have
a reward of $0$, except the last subtask which has a reward of $1$ -- it is easy
to see that withholding partial-progress is a dominant-strategy for all agents
in this case. Further, the makespan is given by
$T_{Non-PPS}=\min_{i\in[n]}\left\{\sum_{u\in[m]}T_i(u)\right\}$, where
$T_i(u)\sim Exp(1)$. Since $m=\Omega(\log n)$, via a standard Chernoff bound, we
have that for all agents, $\sum_{u\in[m]}T_i(u)=\Omega(m)$ with high probability
-- thus we have that $\EE[T_{Non-PPS}]=\Omega(m)$.
\end{example}

Taken together, the examples raise the following two questions: Can one design
subtask rewards to ensure PPS under general aptitude matrices and acyclic
subgraphs? Is ensuring PPS sufficient to minimize makespan in different
settings? Our work provides answers to both these questions.

\subsection{Overview of our Results}
\label{ssec:overview}

For ease of exposition, we will summarize our results here using under the SA model. Extensions to the general aptitudes model will be made clearer in the following sections.

\noindent\textbf{The Treasure-Hunt Game on a Line:} We first focus on the
special case of a linear subtask-network (see figure \ref{fig:thunt}$(a)$). At
any time, each agent has only one available subtask; upon solving it, she is
free to choose when and with whom to share the solution. In this setting, we have the following result: 

\begin{result}
On linear subtask-networks under the SA model, suppose rewards $\{R_t\}$ satisfy
the following condition -- for every pair of subtasks $(u,v)$ s.t. $u$ precedes
$v$:
\begin{align*}
	\frac{R_us_u}{R_vs_v}\geq \alpha,
\end{align*}
where $\alpha=\max_{i\in[n]}\left\{a_i/\sum_{j\in[n]} a_j\right\}$. Then PPS is
a (subgame perfect) Nash Equilibrium.
\end{result}

\noindent Some comments and extensions (for details, refer Section
\ref{ssec:extensions}):
\begin{itemize}[nolistsep,noitemsep]
\item[$\bullet$] \emph{Sufficient condition for general aptitude matrix:} The
above result follows from more general conditions that ensure PPS under
\emph{any aptitude matrix} (Theorem \ref{thm:LineNE}). However, under the SA
model, the condition admits the following nice interpretation:
\item[$\bullet$] \emph{The Proportional-Allocation Mechanism:} Under the SA
model, for any agent, the ratio of expected completion times for any two tasks
is inversely proportional to their simplicities. Thus, a natural `locally-fair'
mechanism is to set subtask rewards to be inversely proportional to their
simplicity -- the above result shows that this surprisingly also ensures global
information sharing. 
\item[$\bullet$] \emph{Approximate Proportional-Allocation:} Moreover, the above
result shows that PPS is incentive compatible even under \emph{approximate}
proportional-allocation -- as long as rewards for earlier subtasks are at least
an $\alpha$ fraction of rewards for later subtasks. Note that $\alpha$
corresponds to the fraction of the total ability possessed by the best agent --
in particular, $\alpha<1$.
\item[$\bullet$] \emph{PPS payoffs in the core:} Our results extend to
transferable-utility cooperative settings, where agents can form groups -- in
particular, we get conditions for ensuring that the vector of payoffs from PPS is in the core (Theorem \ref{thm:LineCore}). Moreover,
the modified conditions admit the proportional-allocation mechanism under the SA
model (Corollary \ref{cor:SA_core}). 
\item[$\bullet$] \emph{Uniqueness of equilibrium with Stackelberg agents:}
Finally, we modify the above result to show that following PPS can be shown to
be the \emph{unique} Nash Equilibrium by admitting Stackelberg
strategies -- wherein a group of agents ex-ante commit to following PPS (Corollary \ref{cor:SA_DSIC}). In
particular, we show that a sufficient condition for uniqueness is that for any
task, the Stackelberg agents together dominate any other individual agent. For a generalization to the general aptitudes model, see Theorem \ref{thm:LineDSIC} in Section \ref{appendix}.
\end{itemize}

\noindent\textbf{The Treasure-Hunt Game on Acyclic Subgraph-Networks:} Analyzing
the treasure-hunt game is more complicated in general acyclic subtask-networks
where, at any time, an agent may have several available subtasks, and can split
her effort between them. In particular, given available subtasks $M$, an agent
$i$ can choose subtasks according to any distribution $\{x_i(u)\}_{u\in M}$.
Despite this added complexity, we obtain extensions to our
previous result for the line:

\begin{result}
On an acyclic subtask-network under the SA model, suppose rewards
$\{R_t\}_{t\in[m]}$ satisfy the following condition: for every pair of subtasks
$(u,v)$ such that $v$ is reachable from $u$ in the subtask-network, we have:
\begin{align*}
	R_us_u\geq R_vs_v.
\end{align*}
Then PPS is a (subgame perfect) Nash Equilibrium.
\end{result}

Some comments and extensions regarding this setting (see Section \ref{sec:DAG}
for details):
\begin{itemize}[nolistsep,noitemsep]
\item[$\bullet$] For \emph{general aptitudes matrix}, we characterize conditions under which
PPS is incentive-compatible (Theorem \ref{thm:DAG}).
\item[$\bullet$] Unlike linear subtask-networks, we can no longer admit
approximate proportional-allocation because earlier tasks need to be weighted
more -- however, exact proportional allocation still incentivizes PPS. 
\item[$\bullet$] We again show conditions under which the vector of payoffs from PPS is in the core (Theorems \ref{thm:DAGwithSA}, \ref{thm:DAGCore}), and
how to ensure that PPS is the unique equilibrium allowing for \emph{Stackelberg} agents (Theorem \ref{thm:DAGDSIC}).
\item[$\bullet$] Ensuring PPS indirectly results in all agents
working on the same subtask at any time. Such behavior, referred to as
\emph{herding} \cite{Strevens2013}, has been reported in empirical studies
\cite{boudreau2014cumulative}. 
\end{itemize}

\noindent\textbf{The Efficiency of PPS in the Treasure-Hunt Game:} Finally, we
explore the connections between PPS and minimum makespan. Although PPS
intuitively appears to be a pre-requisite for minimizing makespan, this may not
be true in general acyclic networks. To quantify the efficiency of PPS, we study
the ratio of the \emph{optimal makespan} to that under any reward-vector which
incentivizes PPS. Note that under conditions wherein PPS is the unique
equilibrium, this is a natural notion of \emph{price of anarchy} (POA) for the
Treasure-Hunt game.

In linear subtask-networks, it is not hard to see that under any
aptitude-matrix, \emph{ensuring PPS is necessary and sufficient for minimizing
the makespan}. For a general acyclic network however, we show that the ratio of
the makespan under PPS to the optimal can be unbounded. Significantly, however,
we show that the ratio scales \emph{only in the number of tasks, and not the
number of agents}. Furthermore, in the case of the SA model, we have the
following result:

\begin{result}
Consider the Treasure-hunt game on any acyclic subtask-network under the SA
model. Suppose rewards are chosen to ensure that all agents follow PPS. Then any
resulting equilibrium also minimizes the expected makespan.
\end{result}	 

Note that the above result does not reference details of how agents choose which
subtask to solve -- PPS alone is sufficient to minimize makespan. 

\noindent\textbf{Interpreting our model and results:} Collaboration/competition
in research involves many factors -- trying to incorporate all of these in a
model may make the strategy-space so complicated as to not allow any analysis.
On the other hand, with a stylized model like ours, there is a danger that some
phenomena arise due to modeling artifacts. We address this concern to some
degree in Section \ref{sec:discussion}, where we argue that our qualitative
results are not sensitive to our assumptions. 

More significantly, our model captures the main tension behind information
sharing in research -- partial-progress in a problem by one agent creates an
externality in the form of information asymmetry. To promote PPS, the reward for
sharing must compensate the agent for removing this externality. Our model lets
us quantify this `price of information' -- in particular, we show that
proportional-reward mechanisms indirectly help ensure partial-progress sharing.
We also show that PPS, while necessary, is also often sufficient, for efficiency
in large-scale research. 

Our work suggests that an effective way to conduct research on a big project is
to first break the project into many small subtasks, and then award credits for
subtasks proportional to their difficulty. The former recommendation matches the
evidence from the Polymath project \cite{cranshaw2011polymath,polymath}, which
is based largely on this idea of aggregating small contributions. The latter
suggests that in order to scale such platforms, it is important to have
mechanisms for rewarding contributions. The fact that that simple mechanisms
suffice to ensure PPS also points the way to potentially having these rewards
defined endogenously by the agents.

\noindent\textbf{Technical Novelty:} Our model, though simple, admits a rich
state space and agent strategies, which moreover affect the state transitions.
Thus, a priori, it is unclear how to characterize equilibria in such settings.
Our results follow from careful induction arguments on appropriate subgames. We
define a class of subgames corresponding to nested pairs of connected subgraphs
of the network. In general acyclic networks, in addition to PPS, we also show
that agents follow a herding strategy -- this turns out to be crucial for
characterizing equilibria. Finally, our efficiency result under the SA model is
somewhat surprising. In a related one-shot game where agents need to choose one
among a set of available tasks \cite{kleinberg}, it has been shown that under a similar
separable-aptitudes model, though there exist reward-structures achieving the
social optimum (in this case, maximizing the number of completed tasks),
finding these rewards is NP-hard. In contrast, we show that ensuring PPS alone
is sufficient to minimize makespan, and give simple sufficient conditions for the
same, thus circumventing the hardness of static settings under the SA model. 

\subsection{Related Work}
\label{ssec:relwork}

Understanding the strategic aspects of contests has had a long
history \cite{konrad2009strategy}. Recently, there has been a focus on
problems related to distribution of credit among researchers, such as: the impact of
authorship order on research \cite{ackerman2012research}; 
collaboration across multiple projects under local profit-sharing rules \cite{vojnovic};
Considering the effect of the priority rule on agents choosing
between open problems, it has been shown that redistributing rewards can make the system
more efficient \cite{kleinberg}. Our work shares much with these works, in particular, in the
idea of engineering rewards to achieve efficiency. However, these works consider only 
static/one-shot settings. Another line of work focuses on 
dynamics in R{\&}D settings \cite{taylor1995digging,harris1987racing}. R{\&}D contests are modeled
as a one-dimensional race. However, dynamics of information-sharing are ignored in these models.

The information-sharing aspect of our work shares similarities with models for
crowdsourcing and online knowledge-sharing forums. There has been some work on 
modeling crowdsourcing platforms as all-pay auctions \cite{chawla2012optimal}, where agents pay a cost to
participate -- this however is not the case in research, where agents derive
utility only by solving open problems in their chosen field. Models for online Q{\&}A forums,
 where rewards are given to the top answers to a question, have also been studied \cite{jain2012designing,ghosh2012implementing}. 
 In these models, each user has some knowledge ex-ante, and by submitting an answer later, can aggregate earlier answers. In
research however, knowledge does not exist ex ante, but is created while working
on problems, as is captured in our model. Finally, information dynamics are
studied by \cite{ghosh2013incentivizing} in the context of online education
forums -- though differing in topic, our work shares much of their modeling
philosophy.

There is also a growing body of empirical work studying open information in
collaborative research. The need for better incentive structures has been underscored in a recent empirical study on the Polymath projects \cite{cranshaw2011polymath}. The value of
open information-sharing in collaborative research has also been demonstrated, via experiments in Topcoder contests \cite{boudreau2014cumulative,boudreau2014disclosure}. They observe that, intermediate rewards, while increasing the efficiency
of problem solving, also leads to herding -- both these findings correspond
to what we observe in our models.

\noindent\textbf{Outline:} The remaining paper is organized as follows: In
Section \ref{sec:treasurehunt}, we analyze projects with linear
subtask-networks, extend these results to general acyclic subtask-networks in Section \ref{sec:DAG}.
Next, in Section \ref{sec:poa}, we characterize the efficiency, with respect to the makespan,
of PPS in different settings. We conclude with a discussion in
Section \ref{sec:discussion}. Some theorems mentioned in the main text are laid out in Section \ref{appendix}.

\section{The Treasure-Hunt Game on Linear Subtask-Networks}
\label{sec:treasurehunt}

\subsection{Extensive-Form Representation of the Treasure-Hunt Game}
\label{ssec:thgamemodel}

We denote $[n]=\{1,2,\ldots,n\}$. The treasure-hunt game on a line is defined as
follows:
\begin{itemize}[nolistsep,noitemsep]
\item[$\bullet$] \textbf{Task Structure:} A project is divided into a set $[m]$
of smaller subtasks, arranged in a linear subtask-network. Subtask $u$ becomes
available to an agent only when she knows solutions to all preceding subtasks;
the overall project is completed upon solving the final subtask. A set of $n$
non-identical selfish agents are committed to working on the project. 

\item[$\bullet$] \textbf{Rewards:} Subtask $u$ has associated reward $R_u\geq
0$, awarded to the first agent to publicly share its solution. We assume that
\emph{the rewards are fixed exogenously}. 

\item[$\bullet$]  \textbf{System Dynamics:} We assume that \emph{agents work on
subtasks independently}. Agent $i$, working alone, solves subtask $u$ after a
delay of $T_i(u)\sim Exp(a_i(u))$, where we refer to $a_i(u)$ as the aptitude of
agent $i$ for subtask $u$. For the special case of separable-aptitudes (see
Definition \ref{def:multiapt}), we have $a_i(u)=a_is_u$, where we refer to
$\{a_i\}_{i\in[n]}$ as the agent-ability vector, and $\{s_u\}_{u\in[m]}$ as the
subtask-simplicity vector. 

\item[$\bullet$] \textbf{Strategy Space:} The game proceeds in continuous time
-- for ease of notation, we suppress the dependence of quantities on time. If an
agent $i$ solves subtask $u$ at some time $t$, then she can share the solution
with any set of agents $A\subseteq [n]$ at any subsequent time -- if the
solution is shared with all other agents, we say it is \emph{publicly shared}.
When agent $i$ shares the solution to subtask $u$ with an agent $j$, then $j$
can start solving the subtask following $u$.

\item[$\bullet$] \textbf{Information Structure:} We assume all agents ex ante
know the aptitude-matrix $\{a_i(t)\}_{i\in[n],t\in[m]}$ -- later we show how
this can be relaxed. During the game, we assume agents only know their own
progress, and what is shared with them by others.
\end{itemize}

\subsection{Progress Sharing in the Treasure-Hunt Game}
\label{ssec:thgamesoln}

Our main result in this section is a sufficient condition on rewards to
incentivize PPS:

\begin{theorem}
\label{thm:LineNE}
Consider the Treasure-hunt game on a linear subtask-network $G$, and a general
agent-subtask aptitude matrix $\{a_i(u)\}$. Suppose the rewards-vector $\{R_u\}$
satisfies the following: for any agent $i$ and for any pair of subtasks $u,v$
such that $u$ precedes $v$, the rewards satisfy:
\begin{align*}
\frac{R_ua_{-i}(u)}{R_va_{-i}(v)}\geq \frac{a_i(v)}{a_i(v)+a_{-i}(v)}, 
\end{align*}
where for any task $w$, we define $a_{-i}(w)\triangleq\sum_{j\neq i}a_j(w)$.
Then for all agents, partial-progress sharing (PPS) is a (subgame perfect) Nash Equilibrium.
\end{theorem}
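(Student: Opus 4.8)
The plan is to fix an arbitrary agent $i$, assume every other agent follows PPS in every continuation, and show that PPS is a best response for $i$ starting from \emph{every} reachable state; since this covers every subgame, the subgame-perfect claim follows. Two reductions come first. (1) Because all completion times are exponential hence memoryless, $i$ only needs to act at an \emph{event time} — an instant at which someone solves a subtask — so the continuous-time game becomes a finite-stage decision problem. (2) Sharing a solution with a strict nonempty subset of agents is weakly dominated, since a rational recipient would immediately re-share it publicly to grab the reward; so we may assume $i$ reveals each solution publicly or not at all. With the other agents always sharing immediately, the public frontier advances only through their work, and the only state private to $i$ is her \emph{lead} $\ell \ge 0$ — the number of subtasks she has solved beyond the frontier, which in a line form a contiguous block since solving a subtask requires knowing its predecessor. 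Hence the state is a pair $(f,\ell)$ with $f$ the current frontier, exactly the ``nested pair of intervals'' $[1,f]\subseteq[1,f+\ell]$.

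Next I would set up the dynamic program. Writing $q_i(u) = a_i(u)/(a_i(u)+a_{-i}(u))$ for the probability $i$ wins a from-scratch race on subtask $u$, an easy induction on $m-f$ gives that the PPS value with no lead is $W_i(f) = \sum_{u=f+1}^{m} q_i(u)\,R_u$. From a state $(f,\ell)$ with $\ell \ge 1$, $i$'s options are: (a) reveal the first $j \le \ell$ subtasks of her stash, collecting $R_{f+1}+\dots+R_{f+j}$ and moving to $(f+j,\ell-j)$; or (b) reveal nothing and work on her only available subtask $f+\ell+1$, so that with probability $a_i(f+\ell+1)/(a_i(f+\ell+1)+a_{-i}(f+1))$ she finishes it next and moves to $(f,\ell+1)$, and otherwise another agent finishes $f+1$ first (costing her $R_{f+1}$) and she moves to $(f+1,\ell-1)$. (Idling is dominated.) The target claim is that option (a) with $j=\ell$ — i.e.\ PPS — is optimal: the optimal value obeys $G_i(f,\ell) = \sum_{u=f+1}^{f+\ell} R_u + W_i(f+\ell)$.

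I would prove this by backward induction on the potential $2(m-f)-\ell$, which strictly decreases at every event and every partial reveal, so the program is well-founded. In the inductive step, substituting the hypothesis into the partial-reveal options shows that revealing a prefix of the stash gives \emph{exactly} the same value as revealing all of it — splitting the stash is neutral while the competitors are still behind the frontier — so the entire content is to show option (b) is not strictly better. Substituting the inductive hypothesis for the successors $(f,\ell+1)$ and $(f+1,\ell-1)$ and simplifying, the difference between the reveal-everything value and the option-(b) value reduces to a positive multiple of
\[
a_{-i}(f+1)\,R_{f+1}\;-\;\frac{a_i(f+\ell+1)\,a_{-i}(f+\ell+1)}{a_i(f+\ell+1)+a_{-i}(f+\ell+1)}\;R_{f+\ell+1},
\]
and rearranging ``$\ge 0$'' here is precisely the theorem's hypothesis for the pair $(u,v)=(f+1,\,f+\ell+1)$. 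As $f\ge 0$ and $\ell\ge 1$ are arbitrary, the hypothesis over all predecessor/successor pairs and all agents is exactly what the induction consumes. The boundary cases $f+\ell=m$ (where $i$ already knows everything) and $\ell=0$ are immediate.

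I expect the real work to be not this final computation but pinning down the model so it is legitimate: justifying that the scalar lead $\ell$ is a sufficient state descriptor (this uses memorylessness, the linear network, and the fact that PPS opponents carry no private state), that public-subset sharing and non-event-triggered delays can be discarded, and that the recursion terminates (the frontier must reach $m$ after finitely many events). Once the state space and Bellman recursion are fixed, the inductive step is the short algebra above, and its punchline — reassuringly — is exactly the inequality in the theorem statement (and, specializing to two subtasks and two equal-ability agents, to the $4R_p \ge R_q$ threshold of Example~\ref{eg:thunt}).
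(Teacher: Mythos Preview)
Your proposal is correct and follows essentially the same route as the paper: fix agent $i$, assume the others follow PPS, parametrize the reachable subgames by the pair (public frontier, $i$'s lead) --- equivalently the paper's $\mathcal{G}^i_{k,l}$ --- and run a backward induction whose inductive step reduces to exactly the inequality in the hypothesis. The only cosmetic differences are that you use the single potential $2(m-f)-\ell$ where the paper uses a lexicographic order on $(k,l)$, and you reduce up front to event-time decisions (effectively $\tau=\infty$) whereas the paper carries a general delay $\tau$ through the computation and observes the difference scales with $\PP[\tau_1<\tau]$.
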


\begin{proof}
We number the subtasks from the end, with the last subtask being denoted as $1$,
and the first as $m$. Fix an agent $i\in[n]$, and assume every other agent
always follows the PPS policy. We define $\mathcal{G}_{k,l}^i, 0\leq k\leq l\leq
m,$ to be the subgame where agent $i$ starts at subtask $k$, and all other
agents start at subtask $l$. Note that agent $i$ starts either at par or ahead
of others -- this is because we assume that all other agents follow PPS. When
$k<l$, agent $i$ is ahead of the others at the start -- however she has not yet
revealed the solutions to her additional subtasks. In this setting, we claim the
following invariant: under the conditions specified in the theorem, PPS is a
best response for agent $i$ in \emph{every} subgame $\mathcal{G}_{k,l}^i,1\leq
k\leq l\leq m$. 

First, some preliminary definitions: given two subgames $\mathcal{G}_{k,l}^i$
and $\mathcal{G}_{p,q}^i$, we say that $\mathcal{G}_{k,l}^i$ is smaller than
$\mathcal{G}_{p,q}^i$ if $k<p$ OR $k=p$ and $l<q$. Also, assuming the invariant
is true, the expected reward earned by agent $i$ in subgame
$\mathcal{G}_{k,l}^i$ is given by:
\begin{align*}
R(\mathcal{G}_{k,l}^i)=\sum_{u=k+1}^{l}R_u+\sum_{u=1}^{k}R_u.\left(\frac{a_i(u)}
{a_i(u)+a_{-i}(u)}\right),	
\end{align*}
where $a_{-i}(u)=\sum_{j\in[n]\setminus\{i\}}a_j(u)$. When $k<l$, we
refer to the (non-empty) subtasks $\{l,l-1,\ldots,k+1\}$ as the \emph{captive
subtasks} of agent $i$. By definition, only $i$ knows the solutions to her
starting subtasks at the start of game $\mathcal{G}_{k,l}^i$. 

We prove the invariant by a joint induction on $k$ and $l$. For the base case,
consider the subgames $\mathcal{G}_{0,l}^i$, which correspond to a game where
$i$ knows solutions to all subtasks -- clearly following PPS (i.e., sharing the
solutions at $t=0$) is a best response. Now we split the remaining subgames into
two cases: $k<l$ and $k=l$. 

\noindent$\bullet$\emph{ Case $i.\,\,(\mathcal{G}_{k,l}^i, 1\leq k<l)$}: 
Assume game $\mathcal{G}_{k,l}^i$ starts at $t=0$, and the invariant holds for
all smaller games. If agent $i$ follows PPS, and publicly shares her captive
subtasks at $t=0$, then $\mathcal{G}_{k,l}^i$ reduces to the smaller subgame
$\mathcal{G}_{k,k}^i$. Suppose instead $i$ chooses to defect from PPS. Let
$\tau_1$ to be the first time when either agent $i$ solves subtask $k$ OR some
agent $j\neq i$ solves subtask $l$. Via a standard property of the minimum of
independent exponential random variables, we have
$\tau_1\sim\min\{Exp(a_i(k)),Exp(a_{-i}(l)\}\sim Exp(a(i,k,l))$, where we define
$a(i,k,l) = a_i(k)+a_{-i}(l)$.

Next, we observe that it is sufficient to consider deviations from PPS of the
following type: $i$ chooses a fixed delay $\tau>0$, and publicly shares the
solutions to captive subtasks at time $t=\min\{\tau_1,\tau\}$. This follows from
the following argument $(*)$: 
\begin{itemize}[nolistsep,noitemsep]
\item Suppose $\tau_1<\tau$, and assume some agent $j\neq i$ solves subtask $l$
before $i$ solves subtask $k$. Then agent $j$ immediately shares the solution to
subtask $l$ (as other agents follow PPS). This reduces the game to
$\mathcal{G}_{k,l-1}^i$, wherein by our induction hypothesis, $i$ immediately
shares her remaining captive subtasks, i.e., $\{l-1,l-2,\ldots,k+1\}$.
\item If $\tau_1<\tau$, but agent $i$ solves subtask $k$ before any other agent
solves subtask $l$, then the game reduces to $\mathcal{G}_{k-1,l}^i$, wherein
again by our induction hypothesis, $i$ immediately shares her captive subtasks,
i.e., $\{l,l-1,\ldots,k\}$.
\item Any randomized choice of $\tau$ can be written as a linear combination of
these policies.
\end{itemize}
\noindent To complete the induction step for $\mathcal{G}_{k,l}^i$, we need to
show that $\tau=0$ is the best response for $i$ in the game
$\mathcal{G}_{k,l}^i$. Let $U^{i}_{k,l}(\tau)$ be the expected reward of agent
$i$ if she chooses delay $\tau$. Also, for any subtask $u$, let
$a(u)=a_i(u)+a_{-i}(u)$. Then for $\tau=0$, we have:
\begin{align*}
U^{i}_{k,l}(0) =& \sum_{u=k+1}^{l}R_u +
\sum_{u=1}^{k}R_u.\left(\frac{a_i(u)}{a(u)}\right)= R_{l} + \Delta +
\frac{a_i(k)}{a(k)}.R_{k} +R(\mathcal{G}_{k-1,k-1}^i),
\end{align*}
where we define $\Delta=\sum_{j=k+1}^{l-1}R_{j}$. 
$R(\mathcal{G}_{k-1,k-1}^i)$ is the expected reward earned by agent $i$ in the
sub-game $\mathcal{G}_{k-1,k-1}^i$ assuming the induction hypothesis. 
Next we turn to the case of $\tau>0$. Suppose $\tau_1<\tau$: conditioned on
this, we have that $i$ solves subtask $k$ before any other agent solves subtask
$l$ with probability $\frac{a_i(k)}{a(i,k,l)}$ (from standard properties of the
exponential distribution). Now for any $\tau>0$, we can write:
\begin{align*}
U^{i}_{k,l}(\tau)=&\PP[\tau_1\geq\tau]\left(\sum_{u=k+1}^{l}R_u+R(\mathcal{G}_{k
,k}^i)\right)
+\PP[\tau_1<\tau]\left(\frac{a_i(k)}{a(i,k,l)}R(\mathcal{G}_{k-1,l}^i)+\frac{a_
{-i}(l)}{a(i,k,l)}R(\mathcal{G}_{k,l-1}^i)\right)
\end{align*}
This follows from the argument $(*)$ given above. Expanding the terms, we have:
\begin{align*}
U^{i}_{k,l}(\tau)&=\PP[\tau_1\geq\tau]\left(R_{l}+\Delta+\frac{a_i(k)}{a(k)}R_{k
}+R(\mathcal{G}_{k-1,k-1}^i)\right) \\
&+\PP[\tau_1<\tau]\frac{a_i(k)}{a(i,k,l)}\Bigg(R_{l}+\Delta+R_{k}+R(\mathcal{G}_
{k-1,k-1}^i)\Bigg)\\
&+\PP[\tau_1<\tau]\frac{a_{-i}(l)}{a(i,k,l)}\left(\Delta+\frac{a_i(k)}{a(k)}R_{k
}+R(\mathcal{G}_{k-1,k-1}^i)\right)
\end{align*}
Simplifying using $a(i,k,l)=a_i(k)+a_{-i}(l)$ and
$\PP[\tau_1\geq\tau]+\PP[\tau_1<\tau]=1$, we get:
\begin{align*}
U^{i}_{k,l}(\tau)=&
R_{l}\left(1-\PP[\tau_1<\tau]\frac{a_{-i}(l)}{a(i,k,l)}\right)+\Delta
+ R_{k}\left(\frac{a_i(k)}{a(k)}+\PP[\tau_1<\tau]\frac{a_{-i}(k)}{a(k)}.\frac{a_
{i}(k)}{a(i,k,l)}\right) +R(\mathcal{G}_{k-1,k-1}^i)
\end{align*}
Subtracting this from our expression for $U^{i}_{k,l}(0)$, we have:
\begin{align*}
U^{i}_{k,l}(0) - U^{i}_{k,l}(\tau) =\frac{\PP[\tau_1<\tau]}{a(i,k,l)}\left(
R_{l}a_{-i}(l)-R_{k}\frac{a_{-i}(k)a_i(k)}{a(k)}\right).
\end{align*}
Finally from the condition in Theorem \ref{thm:LineNE}, we have
$\frac{R_{l}a_{-i}(l)}{R_{k}a_{-i}(k)}\geq \frac{a_i(k)}{a(k)}$: substituting we
get $U^{i}_{k,l}(0) - U^{i}_{k,l}(\tau)>0\fall \tau>0$. Thus setting $\tau=0$
maximizes expected reward.

\noindent$\bullet$\emph{ Case $ii.\,\,(\mathcal{G}_{k,k}^i, k>1)$}: 
Assume the invariant holds for all smaller games. In this case, note that the
game reduces to either $\mathcal{G}_{k-1,k}^i$ (if $i$ solves subtask $k$) or to
$\mathcal{G}_{k-1,k-1}^i$ (if some agent $j\neq i$ solves subtask $l$). For both
these subgames, we have that PPS is a best response for $i$ via our induction
hypothesis. This completes the induction.
\end{proof}

\begin{theorem}\label{thm:LineCore}
If for any coalition $C \subset [n]$ and for any pair of subtasks $u,v$
such that $u$ precedes $v$, the rewards satisfy:
\begin{align*}
\frac{R_u a_{-C}(u)}{R_va_{-C}(v)}\geq \frac{a_C(v)}{a_C(v)+a_{-C}(v)}, 
\end{align*}
where for any task $w$, we define $a_{C}(w)\triangleq\sum_{j \in C}a_j(w)$ and $a_{-C}(w) \triangleq \sum_{j \notin C}a_j(w)$,
then the vector of payoffs from the PPS policy is in the core.
\end{theorem}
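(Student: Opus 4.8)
The plan is to reduce the coalitional claim to the single-agent analysis of Theorem~\ref{thm:LineNE} by treating a deviating coalition as one ``super-agent''. First I would pin down the objects. Under PPS every agent works on the currently-available subtask, so agent $i$ is the first to solve and publicly share subtask $u$ with probability $a_i(u)/a(u)$, where $a(u)=\sum_j a_j(u)$; hence her PPS payoff is $\phi_i=\sum_u R_u\cdot a_i(u)/a(u)$. Feasibility is immediate: $\sum_i\phi_i=\sum_u R_u\big(\sum_i a_i(u)\big)/a(u)=\sum_u R_u$, which is exactly $v([n])$, the value of the grand coalition (it completes the project and collects every reward). So the only thing left is the coalitional constraint $\sum_{i\in C}\phi_i\ge v(C)$ for every $C\subsetneq[n]$, where $v(C)$ is the best total expected reward $C$ can secure when the agents outside $C$ follow PPS.

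Next I would argue that, from the coalition's viewpoint, it loses nothing by behaving as a single agent with aptitude $a_C(w)=\sum_{j\in C}a_j(w)$ on each subtask $w$. Since utility is transferable and there is no within-coalition priority competition, $C$ only cares about being first to publicly share each subtask. On a line only the frontier subtask (the one immediately after the last commonly-known solved subtask) can be worked on productively, so (i) it is weakly optimal for members of $C$ to share their solutions internally without delay, after which (ii) all members work on that same frontier subtask, and the minimum of their independent $Exp(a_j(w))$ completion times is $Exp(a_C(w))$. The only residual strategic freedom is the delay before publicly revealing captive subtasks to the non-members --- precisely the deviation space considered in the proof of Theorem~\ref{thm:LineNE}. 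I would state this reduction as a small lemma so the subsequent appeal is airtight.

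I would then invoke the proof of Theorem~\ref{thm:LineNE} verbatim under the substitutions $i\mapsto C$, $a_i(\cdot)\mapsto a_C(\cdot)$, $a_{-i}(\cdot)\mapsto a_{-C}(\cdot)$: define subgames $\mathcal{G}^C_{k,l}$ (super-agent $C$ at subtask $k$, all other agents at $l$), observe that the hypothesis of the present theorem is exactly the hypothesis of Theorem~\ref{thm:LineNE} in these variables, and conclude by the identical joint induction on $(k,l)$ that PPS (zero delay in every subgame) is a best response for $C$. Consequently $v(C)$ equals the reward $C$ earns by following PPS, namely $\sum_u R_u\cdot a_C(u)/a(u)=\sum_{i\in C}\phi_i$, which gives the coalitional constraint with equality. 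Together with feasibility, this places $\phi$ in the core.

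I expect the main obstacle to be the second step: rigorously justifying that a coordinating coalition --- which could in principle stagger its members across different subtasks, have members withhold information from one another, or reveal different captive subtasks at different times --- can do no better than the single-super-agent strategy analyzed in Theorem~\ref{thm:LineNE}. The linearity of the subtask-network is what makes this clean: because only the frontier subtask is ever worth working on, internal information-hoarding is dominated and the coalition's joint behavior collapses to ``one super-agent plus a single public-release time,'' so no strategy classes arise beyond those already handled by the earlier induction. Once that reduction is secured, the rest is bookkeeping.
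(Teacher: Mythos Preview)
Your proposal is correct and follows essentially the same route as the paper: treat a deviating coalition as a single super-agent with aptitude $a_C(\cdot)$, observe that the hypothesis is precisely the Theorem~\ref{thm:LineNE} condition for this super-agent against the remaining agents, conclude that PPS is a best response for the coalition, and then note that the resulting coalitional payoff $\sum_u R_u\,a_C(u)/a(u)$ telescopes to $\sum_{i\in C}\phi_i$. The paper's proof is terser---it simply asserts the super-agent reduction as an observation---whereas you spell out feasibility and the reduction lemma more carefully, but the argument is the same.
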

\begin{proof}
Since the conditions of Theorem \ref{thm:LineNE} are met, PPS is a Nash equilibrium. The utility of each agent $i$ in this equilibrium is clearly given by $u_i \triangleq \sum_{u \in [m]} \frac{a_i(u)}{a_i(u) + a_{-i}(u)}R_u$.

Observe that the combined utility gained by a coalition is equivalent to that of a single player with completion-time given by the rate parameter $a_S(.) = \sum_{j \in S} a_j(.)$. By the condition in the theorem, for any coalition $C \subset [n]$, PPS is a best response to everyone else's following PPS. And the payoff $v(C)$ obtained by $C$ is equal to 
\begin{align*}
v(C) = \sum_{u \in [m]}\frac{a_C(u)}{a_C(u)+a_{-C}(u)}R_u = \sum_{u \in [m]} \sum_{i \in C} \frac{a_i(u)}{a_i(u)+a_{-i}(u)}R_u = \sum_{i \in C}u_i
\end{align*}
\end{proof}

Note that the first scenario in Example \ref{eg:thunt} satisfies the condition of Theorem \ref{thm:LineNE}, while the second violates it. As we have just seen, the proof of this theorem follows from a careful backward-induction argument. Making fuller use of this idea, we can extend this theorem in many interesting ways, as we discuss next.

\subsection{The SA model, and stronger equilibrium concepts:}
\label{ssec:extensions}

\begin{itemize}[nolistsep,noitemsep]
\item[$\bullet$] \emph{Approximate Proportional-Allocation in the SA model:}
Applying the theorem to the SA model gives the following corollary (presented as
Main Result $1$ in Section \ref{ssec:overview}):

\begin{corollary}
\label{cor:SA}
Consider the Treasure-hunt game on linear subgraph-network $G$ under the SA
model. Suppose $A\triangleq\sum_{i\in[n]}a_i$, and $\alpha_{NE}\triangleq
\max_{i\in[n]}\frac{a_i}{A}$. Then the PPS policy is a Nash equilibrium for all
agents if the rewards-vector satisfies:
\begin{align*}
\frac{R_us_{u}}{R_vs_{v}}\geq \alpha_{NE}, \,\,\fall (u,v)\mbox{ s.t. $u$
precedes $v$ in $G$}
\end{align*}
\end{corollary}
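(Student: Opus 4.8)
The plan is to obtain Corollary \ref{cor:SA} as a direct specialization of Theorem \ref{thm:LineNE} to the separable-aptitudes structure. The only substantive step is to verify that, under $a_i(u) = a_i s_u$, the per-agent, per-pair inequality in Theorem \ref{thm:LineNE} collapses to a single clean inequality, after which quantifying over all agents introduces exactly the factor $\alpha_{NE} = \max_{i\in[n]} a_i/A$.

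First I would substitute $a_i(w) = a_i s_w$ into the definition $a_{-i}(w) = \sum_{j \neq i} a_j(w)$, which gives $a_{-i}(w) = s_w (A - a_i)$ with $A = \sum_{j\in[n]} a_j$. For any pair $(u,v)$ with $u$ preceding $v$, the left-hand side of the theorem's condition then becomes $\frac{R_u a_{-i}(u)}{R_v a_{-i}(v)} = \frac{R_u s_u (A - a_i)}{R_v s_v (A - a_i)} = \frac{R_u s_u}{R_v s_v}$, since the common factor $(A - a_i)$ cancels (this uses $A > a_i$, which holds for $n \geq 2$; the statement is vacuous for $n = 1$). Likewise the right-hand side simplifies to $\frac{a_i(v)}{a_i(v) + a_{-i}(v)} = \frac{a_i s_v}{a_i s_v + s_v(A - a_i)} = \frac{a_i}{A}$. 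Hence, for a fixed agent $i$, the hypothesis of Theorem \ref{thm:LineNE} is equivalent to $\frac{R_u s_u}{R_v s_v} \geq \frac{a_i}{A}$ for every pair $(u,v)$ with $u$ preceding $v$ in $G$.

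Second, Theorem \ref{thm:LineNE} demands its condition hold simultaneously for every agent $i \in [n]$, so it suffices that the rewards satisfy $\frac{R_u s_u}{R_v s_v} \geq \max_{i\in[n]} \frac{a_i}{A} = \alpha_{NE}$ for every such pair. This is exactly the hypothesis of the corollary, and invoking Theorem \ref{thm:LineNE} yields that PPS is a (subgame perfect) Nash equilibrium, completing the proof.

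I do not expect any real obstacle: the cancellation of the simplicity-weighted factor $(A - a_i)$ is the crux, and it is precisely what makes the separable model tractable. The only points meriting a remark are the degenerate $n = 1$ case and the observation that the ``proportional-allocation'' choice $R_u \propto 1/s_u$ satisfies the inequality with room to spare (since $\alpha_{NE} < 1$), as flagged in the overview. The final write-up should be only a few lines long.
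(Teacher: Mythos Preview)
Your proposal is correct and matches the paper's approach: the paper states the corollary follows simply by ``applying the theorem to the SA model,'' and your substitution $a_i(w)=a_is_w$, $a_{-i}(w)=s_w(A-a_i)$ together with the cancellation yielding $\frac{R_us_u}{R_vs_v}\geq a_i/A$ is exactly that specialization. The remarks on the degenerate $n=1$ case and proportional allocation are fine but optional.
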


Note that under the SA model, for any agent $i$, the ratio of expected
completion time for subtasks $u$ and $v$ obeys
$\frac{\EE[T_i(u)]}{\EE[T_i(v)}=\frac{s_v}{s_u}$. Thus we can reinterpret the
above condition in terms of an \emph{approximate proportional-allocation
mechanism}, which rewards subtasks proportional to the expected time they take
to solve. Observe also that $\alpha_{NE}$ is the fraction of the total ability
possessed by the top agent. In a sense, the
top agent is most likely to gain by deviating from PPS, and the rewards should be calibrated so as to prevent this. 

\item[$\bullet$] \emph{PPS payoffs in the Core of Cooperative Settings:} As seen in Theorem \ref{thm:LineCore}, we can derive a sufficient condition for the vector of payoffs from PPS to be in the core: to ensure that no coalition of agents, who share
partial-progress and rewards amongst themselves, benefits by deviating
from the PPS policy. We state this result for the SA model as follows:

\begin{corollary}
\label{cor:SA_core}
Consider the Treasure-hunt game on linear subtask-network $G$, under the
separable-aptitudes model. Define $A \triangleq\sum_{i\in[n]}a_i$ and
$\alpha_{C}\triangleq 1-\min_{i\in[n]}\frac{a_i}{A}$. Then the vector of payoffs from the PPS policy is in
the core if the rewards-vector satisfies:
\begin{align*}
\frac{R_us_{u}}{R_vs_{v}}\geq \alpha_{C}, \,\,\fall (u,v)\mbox{ s.t. $u$
precedes $v$ in $G$}
\end{align*}
\end{corollary}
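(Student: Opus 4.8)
The plan is to obtain the corollary as a direct specialization of Theorem~\ref{thm:LineCore} to the separable-aptitudes model, reducing its coalition-indexed hypothesis to the single stated inequality. The first step is the substitution $a_j(w) = a_j s_w$. For any nonempty proper coalition $C \subsetneq [n]$ and any task $w$ this gives $a_C(w) = s_w a_C$ and $a_{-C}(w) = s_w a_{-C}$, where I write $a_C \triangleq \sum_{j \in C} a_j$ and $a_{-C} \triangleq \sum_{j \notin C} a_j$, so that $a_C + a_{-C} = A$. Plugging these into the hypothesis of Theorem~\ref{thm:LineCore}, the factor $s_w$ cancels from the ratio $a_{-C}(u)/a_{-C}(v)$ (this is where $a_{-C} \neq 0$, i.e. $C \neq [n]$, is used), and $a_C(v)/(a_C(v) + a_{-C}(v)) = a_C/A$. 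Hence the hypothesis becomes precisely $\tfrac{R_u s_u}{R_v s_v} \geq \tfrac{a_C}{A}$ for every $C \subsetneq [n]$ and every pair $(u,v)$ with $u$ preceding $v$.

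The second step is to identify the worst-case coalition. The right-hand side $a_C/A$ is increasing in the coalition, so over proper subsets it is maximized by the largest one, namely $C = [n] \setminus \{i^\ast\}$ with $i^\ast$ an agent of minimum ability; this yields $\max_{C \subsetneq [n]} a_C/A = 1 - \min_{i \in [n]} a_i/A = \alpha_C$. (The grand coalition $C = [n]$ needs no separate check: it receives the entire reward $\sum_u R_u$, which in this constant-sum game equals $\sum_i u_i$, so its core constraint holds with equality.) Therefore the hypothesis $R_u s_u/(R_v s_v) \geq \alpha_C$ implies $R_u s_u/(R_v s_v) \geq a_C/A$ for all coalitions simultaneously, and invoking Theorem~\ref{thm:LineCore} gives that the vector of PPS payoffs is in the core.

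This argument is a substitution followed by a one-line maximization over coalition size, so I do not anticipate a genuine obstacle; the only subtleties worth spelling out are (i) verifying that the simplicities $s_w$ truly cancel, so that the dependence on $C$ enters only through $a_C/A$, and (ii) getting the constant right --- the binding coalition is the one that drops the \emph{weakest} agent, which is why $\alpha_C = 1 - \min_i a_i/A$, in contrast to the Nash-equilibrium version (Corollary~\ref{cor:SA}), where the binding deviation is by the \emph{strongest} single agent and the constant is $\alpha_{NE} = \max_i a_i/A$.
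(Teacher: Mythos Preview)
Your proof is correct and follows the same approach as the paper: specialize Theorem~\ref{thm:LineCore} to the SA model, observe that the simplicities cancel so the condition reduces to $R_u s_u/(R_v s_v)\ge a_C/A$, and then note that the binding coalition is $[n]\setminus\{i^\ast\}$ with $i^\ast$ the weakest agent, giving $\alpha_C$. The paper compresses this into a single sentence, whereas you spell out the substitution and the handling of the grand coalition, but the argument is the same.
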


The corollary follows from Theorem \ref{thm:LineCore} by observing that $A \alpha_C = 1 - \min_{i \in [n]}a_i$ corresponds to the combined ``ability" of the most powerful coalition consisting of everyone except a single agent. Note that $\alpha_{NE}\leq \alpha_{C}<1$ -- thus the vector of payoffs from an approximate proportional-allocation mechanism is in the core, but with a stricter approximation factor. 

\item[$\bullet$] \emph{Stackelberg strategies:} 
One subtle point regarding Theorem \ref{thm:LineNE} is that the resulting
equilibrium need not be unique. For two agents, an argument similar to Example
\ref{eg:thunt} can be used to show uniqueness -- this approach does not easily
extend to more players. However, we can circumvent this difficulty and find
conditions under which PPS is the unique Nash equilibrium, by
considering settings with \emph{Stackelberg leaders} -- agents who \emph{ex ante
commit to following the PPS policy}. We state this here for the SA model:

\begin{corollary}
\label{cor:SA_DSIC}
Consider the Treasure-hunt game on linear subtask-network $G$ under the SA
model, with $A \triangleq\sum_{i\in[n]}a_i$. Suppose a subset of agents
$S\subseteq[n]$ ex ante commit to PPS. Define $\alpha_{S}=\max_{i\in [n]}
\frac{a_i}{a_S}\frac{a_{-i}}{a_{i}+a_{-i}}$, where
$a_S = \sum_{i \in S}a_i$. Then PPS is the unique Nash Equilibrium
if the rewards satisfy:
\begin{align*}
\frac{R_us_{u}}{R_vs_{v}}\geq \alpha_{S}, \,\,\fall (u,v)\mbox{ s.t. $u$
precedes $v$ in $G$}
\end{align*}
\end{corollary}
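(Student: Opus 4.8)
The plan is to reduce Corollary~\ref{cor:SA_DSIC} to a strengthened version of the backward-induction argument used in Theorem~\ref{thm:LineNE}, where the novelty is that we must rule out \emph{every} other equilibrium, not merely verify that PPS is a best response against PPS. First I would fix an arbitrary Nash equilibrium profile $\sigma$ and fix an arbitrary agent $i \notin S$ (agents in $S$ are committed to PPS by hypothesis, so there is nothing to check for them). The key structural observation is that, because the $|S|$ Stackelberg agents always share immediately, agent $i$ always faces a ``shadow opponent'' of combined ability at least $a_S$ who is following PPS. I would then re-run the subgame decomposition $\mathcal{G}^i_{k,l}$ from the proof of Theorem~\ref{thm:LineNE}, but now with the other agents' behavior partly adversarial: in $\mathcal{G}^i_{k,l}$, the worst case for forcing $i$ to deviate is when the non-$S$ agents also withhold, so that the only agents racing to claim $i$'s captive subtasks are those in $S$. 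This replaces the denominator $a(i,k,l) = a_i(k) + a_{-i}(l)$ in the original computation with the smaller quantity $a_i(k) + a_S(l)$, which only makes the incentive to share \emph{weaker}; this is exactly where the tighter constant $\alpha_S$, with its $a_S$ in the denominator, comes from.

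The main steps, in order, would be: (1) argue that in any equilibrium, since each agent in $S$ follows PPS regardless, the game faced by the non-$S$ agents is a game ``against a PPS player of ability $a_S$ plus the other strategic players''; (2) set up the same smaller-than ordering on subgames $\mathcal{G}^i_{k,l}$ and the same $(*)$-style reduction to fixed-delay deviations, observing that the reduction still goes through because the $S$-agents immediately reveal any subtask they solve; (3) show by induction on the subgame ordering that in \emph{every} subgame, PPS strictly dominates any positive delay for agent $i$ --- the strictness is what upgrades ``is an equilibrium'' to ``is the unique equilibrium.'' Concretely, the analogue of the final display in Theorem~\ref{thm:LineNE} becomes
\begin{align*}
U^i_{k,l}(0) - U^i_{k,l}(\tau) = \frac{\PP[\tau_1 < \tau]}{a_i(k) + a_S(l)}\left(R_l a_S(l) - R_k \frac{a_{-i}(k) a_i(k)}{a(k)}\right),
\end{align*}
so I would need $\frac{R_l a_S(l)}{R_k a_{-i}(k)} \geq \frac{a_i(k)}{a(k)}$ for all $i \notin S$ and all $k < l$; specializing to the SA model ($a_i(u) = a_i s_u$) and rearranging yields precisely $\frac{R_l s_l}{R_k s_k} \geq \max_i \frac{a_i}{a_S}\cdot\frac{a_{-i}}{a_i + a_{-i}} = \alpha_S$ (recalling the subtasks are numbered from the end, so ``$k<l$'' means $k$ succeeds $l$, matching the ``$u$ precedes $v$'' statement). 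Since $\PP[\tau_1<\tau] > 0$ for every $\tau > 0$, the difference is strictly positive, so no equilibrium can have any agent using a positive delay, i.e., PPS is the unique equilibrium.

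The hard part will be step (1)--(2): making rigorous the claim that the ``worst case'' for agent $i$'s incentive to share is the one where the \emph{other} strategic agents withhold maximally, and that this worst case is still covered by an induction over the same subgame family. In Theorem~\ref{thm:LineNE} the other agents' strategies were \emph{fixed} to PPS, so the subgame tree was clean; here their strategies are endogenous, and a given ``delay $\tau$ for $i$'' can interact with various delays chosen by others, so I would need to argue either (a) that it suffices to lower-bound $i$'s gain from sharing by pessimistically assuming only the $S$-agents ever race for $i$'s captive tasks --- which is legitimate because revealing a captive subtask can only be more attractive when fewer competitors threaten to grab it first --- or (b) push the induction through simultaneously over all non-$S$ agents. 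Approach (a) is cleaner: it decouples the analysis into one-agent deviation checks, each identical in form to the Theorem~\ref{thm:LineNE} computation but with $a_{-i}(l)$ downgraded to $a_S(l)$, and the strict inequality then immediately gives uniqueness. A secondary subtlety worth a sentence is handling deviations that are not pure fixed delays (randomized or history-dependent stopping); as in Theorem~\ref{thm:LineNE}, these are convex combinations of the fixed-delay policies, and strict dominance of $\tau = 0$ over each extreme point of this set gives strict dominance over every mixture.
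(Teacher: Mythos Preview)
Your proposal correctly identifies the core mechanism --- the committed Stackelberg agents act as a ``shadow opponent'' of combined rate $a_S$, and this is precisely what replaces $a_{-i}(l)$ by $a_S s_l$ in the key inequality and produces the constant $\alpha_S$. Your derivation of the SA specialization matches the paper's, which deduces the corollary directly from the general-aptitude Theorem~\ref{thm:LineDSIC}.

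Where you diverge from the paper is in the choice between your options (a) and (b). You prefer (a), arguing that a one-agent analysis with a pessimistic bound on the race for captive tasks suffices. The paper in fact takes route (b): it formulates the invariant as ``PPS is a best response for \emph{every} agent $j$ in \emph{every} subgame $\mathcal{G}^j_{k,l}$'' and extends the smaller-than ordering across different focal agents (declaring $\mathcal{G}^j_{p,q}$ smaller than $\mathcal{G}^i_{k,l}$ when $i\neq j$ and $q<l$). This simultaneous induction is not just tidier bookkeeping; it is what pins down the continuation values. Your displayed identity for $U^i_{k,l}(0)-U^i_{k,l}(\tau)$ presupposes that once the game becomes smaller, \emph{all} non-$S$ agents play PPS there --- otherwise $R(\mathcal{G}^i_{k-1,k-1})$ and the like are not given by the all-PPS formula, and the telescoping that produces your two-term expression collapses. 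Approach (a) alone gives a pessimistic bound on the transition probabilities in the \emph{current} step but says nothing about the other non-$S$ agents' behavior in the continuation; you need the cross-agent induction hypothesis to force them into PPS there. What the paper actually does is a hybrid: it runs the simultaneous induction (b), and \emph{within} each step allows arbitrary current-step strategies for the other non-$S$ agents, then bounds their contributions by the worst case (your (a)-style estimate). Consequently the final comparison is an inequality $U^i_{k,l}-U^{i,\mathrm{dev}}_{k,l}\geq 0$, not the equality you wrote.

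One minor wording issue: your justification that ``revealing a captive subtask can only be more attractive when fewer competitors threaten to grab it first'' has the direction reversed --- fewer threats make \emph{withholding} more attractive, which is exactly why assuming only the $S$-agents race is the correct pessimistic bound for the incentive to share.
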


This follows from a more general result on general aptitudes, stated as Theorem \ref{thm:LineDSIC} in Section \ref{appendix}).
Note that if $a_S\geq a_i$, then $\alpha_S<1$ -- thus if the
Stackelberg agents dominate any individual agent, then proportional allocation
results in PPS. Such behavior mirrors real-life settings, wherein when top
researchers commit to open information sharing, then many others follow suit. An
example of this can be seen in the Polymath project, where a core group of
mathematicians first committed to open discussion, and this led to others
joining the project. 

\item[$\bullet$] \emph{Weaker Requirements on Ex Ante Knowledge:} We assumed
that all agents are aware of the entire aptitude matrix. Note however that the conditions in Theorem
\ref{thm:LineNE} depend only on an agent's own aptitude vector, and the
\emph{aggregate} aptitudes of the remaining agents. In particular, in the SA
model, an agent $i$ needs to only know $\alpha_i=a_i/A$, the ratio of her
ability to the total ability of all agents, in order to determine if PPS is a
best response or not. 

On the other hand, to correctly design exogenous rewards, a social planner needs
to know the simplicities of all subtasks. In case these estimates are not
perfect, but are within a $1\pm\epsilon$ factor of the true values, then as long
as $\epsilon>(1-\alpha_{NE})/(1+\alpha_{NE})$, then proportional-allocation
using the noisy estimates still incentivizes PPS.
\end{itemize}

\section{The Treasure-Hunt game on general directed acyclic graphs}
\label{sec:DAG}

In this section, we extend the results from the previous section to general acyclic subtask-networks. The subtask-network now encodes the following dependency constraints: a subtask $u$ with source node $v_u$ can be attempted by an agent $i$ only after $i$ knows the solutions to \emph{all} subtasks terminating at $v_u$ (i.e., all the in-edges of node $v_u$). Note that at any time, an agent may have multiple available subtasks. We assume that agent $i$ chooses from amongst her available subtasks $M$ according to some strategically-chosen distribution $\{x_i(u)\}_{u\in M}$ -- as a result, subtask $u\in M$ is solved by $i$ in time $Exp(x_i(u)a_i(u))$. The remaining model is as described in Section \ref{ssec:thgamemodel}. 

A technical difficulty in  acyclic subtask-networks is that even assuming agents
follow PPS, there is no closed-form expression for agent's utility in a subgame
-- consequently, at any intermediate stage, there is no simple way to determine
an agent's choice of available subtask. In spite of this, we can still derive a
sufficient condition on rewards to ensure PPS -- this also gives a constructive
proof of existence of reward-vectors that ensure PPS in any acyclic
subtask-network. 

We now introduce the following notion of a virtual reward:

\begin{definition}[\textbf{Virtual rewards}]
\label{def:and_gamma_lambda}
For any agent $i \subseteq [n]$ and subtask $u\in [m]$, the virtual-reward
$\gamma_i(u)$ is defined as: 
\begin{align*}
\gamma_i(u) &\triangleq \frac{R_u a_{i}(u) a_{-i}(u)}{a_{i}(u) + a_{-i}(u)},
\end{align*}
where $a_{-i}(u)=\sum_{j\in S\setminus\{i\}}a_j(u)$.
\end{definition}

Using this definition, we have the following theorem:

\begin{theorem}
\label{thm:DAG}
Consider the Treasure-hunt game on a directed acyclic subtask-network $G$, and a
general agent-subtask aptitude matrix $\{a_i(u)\}$. Suppose the rewards-vector
$\{R_u\}$ satisfies the following conditions:
\begin{itemize}[nolistsep,noitemsep]
 \item[$\bullet$] (Monotonicity) There exists a total ordering $\prec$ on
subtasks $[m]$ such that for every agent $i$ and pair of subtasks $u,v$, we have
$v\prec u\iff \gamma_i(v)<\gamma_i(u)$. 
 \item[$\bullet$] For all pairs of subtasks $(u,v)$ such that $v$ is reachable
from $u$ in $G$, we have $v\prec u$.
\end{itemize}
Then the following strategies together constitute a Nash
equilibrium:
\begin{itemize}[nolistsep,noitemsep]
\item[$\bullet$] Every agent implements the \emph{PPS} policy.
\item[$\bullet$] At any time, if $M_o\subseteq [m]$ is the set of available
subtasks, then every agent $i$ chooses to work on the unique subtask
$u^*=\argmax_{u\in M_o}\gamma_i(u)$
\end{itemize}
\end{theorem}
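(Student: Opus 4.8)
The plan is to adapt the backward‑induction argument of Theorem~\ref{thm:LineNE}, replacing the linearly‑ordered positions $(k,l)$ with nested pairs of ``completed'' connected subgraphs. Fix an agent $i$ and assume every other agent follows the prescribed strategy. The Monotonicity condition is the key enabler: since all agents rank subtasks by the \emph{same} total order $\prec$, at any time every other agent works on the same available subtask, so the coalition of the remaining agents behaves exactly like a single agent with aggregate aptitude $a_{-i}(\cdot)$. Moreover, since the others follow PPS, their common knowledge is some completion‑closed set of subtasks $L$ (a connected subgraph containing the source), and agent $i$'s knowledge is a completion‑closed set $K\supseteq L$ (she is weakly ahead, as the others share with her). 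Call $K\setminus L$ the \emph{captive} subtasks of $i$ and let $\mathcal{G}^i_{K,L}$ denote the subgame in this state. Assuming the invariant below, the value of the prescribed strategy in $\mathcal{G}^i_{K,L}$ is
\begin{align*}
R(\mathcal{G}^i_{K,L}) = \sum_{u\in K\setminus L}R_u + \sum_{u\notin K}R_u\cdot\frac{a_i(u)}{a_i(u)+a_{-i}(u)},
\end{align*}
obtained by unrolling: $i$ first locks in the rewards of her captives, and thereafter every subtask $u\notin K$ is contested exactly once, with both $i$ and the others working on it at full rate (by Monotonicity), which $i$ wins with probability $a_i(u)/(a_i(u)+a_{-i}(u))$.

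The invariant to prove is: in every subgame $\mathcal{G}^i_{K,L}$, the prescribed strategy (publicly share all captives immediately, then always work on $u^*=\argmax_{u}\gamma_i(u)$ over the subtasks available to $i$) is a best response. The induction runs over the rank $(m-|K|,\,m-|L|)$ ordered lexicographically; the base case $K=[m]$ (agent $i$ knows everything) is immediate, since she can only lose by delaying any announcement and has nothing left to work on. The step first needs a structural lemma: if $u^*_L$ is the $\prec$‑maximal subtask available to the others and $u^*_L\notin K$, then in fact $u^*_L=u^*_K$, and in every case every captive of $i$ is $\prec$‑below every subtask currently available to the others. This is where the Reachability condition enters — a subtask unlocked for $i$ by a captive is reachable from it, hence $\prec$‑below it — and it guarantees that under \emph{any} play the state remains a nested pair $L\subseteq K$, with every ``event'' (a subtask being solved, and if by $i$ possibly revealed) strictly decreasing the rank, so the IH always applies to the resulting subgame.

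The inductive step then splits exactly as in Theorem~\ref{thm:LineNE}. (i) \emph{Announcement timing.} If $u^*_L\in K\setminus L$, the others are chasing a captive of $i$, so delaying risks losing $R_{u^*_L}$; arguing as in the line (it suffices to consider ``delay by $\tau$, reveal at $\min\{\tau_1,\tau\}$'', since any event reduces to a smaller subgame where the IH forces immediate revelation), the loss $U^i_{K,L}(0)-U^i_{K,L}(\tau)$ is a positive multiple of $R_{u^*_L}a_{-i}(u^*_L)-\gamma_i(u^*_K)$, and since the structural lemma gives $u^*_L\succ u^*_K$ we get $R_{u^*_L}a_{-i}(u^*_L)\ge\gamma_i(u^*_L)>\gamma_i(u^*_K)$ by Monotonicity, so $\tau=0$ is optimal; if $u^*_L\notin K$ the captives are never at risk and revelation timing is irrelevant. (ii) \emph{Subtask choice.} Comparing ``work at full effort on $u^*_K$'' with ``work at full effort on any other available $w$'', the payoff difference is a positive multiple of $\gamma_i(u^*_K)-\gamma_i(w)\ge 0$, again by Monotonicity; splitting effort among several available subtasks is weakly dominated by concentration, handled by the same comparison applied recursively to the first event. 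Combining (i) and (ii) across all subgames shows the prescribed profile is a Nash equilibrium (subgame‑perfect, in fact).

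The main obstacle is exactly this bookkeeping for the step: one must verify that the subgame family is closed under \emph{arbitrary} deviations of $i$ — she may reveal any subset of captives at arbitrary times and split effort arbitrarily — so that her two‑dimensional deviation (\emph{when} to reveal, \emph{what} to work on) cleanly decouples into the two one‑dimensional comparisons above. The structural lemma is what makes the decoupling possible, and it is driven by the two hypotheses: Monotonicity collapses ``everyone else'' into a single agent and keeps the state low‑dimensional, while Reachability makes the ordering consistent with the DAG so that the ``available'' sets behave predictably. Once this is in place the role of the virtual reward is transparent: $\gamma_i(u)$ is precisely the quantity for which the myopic rule ``work on the largest‑$\gamma_i$ available subtask'' is globally optimal, and the two conditions ensure this myopic choice is always the reachability‑largest available subtask.
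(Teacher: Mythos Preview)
Your approach is essentially the paper's: backward induction over nested knowledge pairs $(K,L)$ (the paper's $(T^c,M^c)$), with the same closed-form value, the same case split (captives vs.\ none), and the same punchline that every one-step comparison reduces to a difference of virtual rewards $\gamma_i(\cdot)$. Two points need tightening.

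First, the second clause of your structural lemma is false as stated: it is \emph{not} true that every captive is $\prec$-below every subtask available to the others, since a captive can itself lie in $M_o$ and even be $u^*_L$ (this is exactly your case $u^*_L\in K\setminus L$). The correct statement, which is what both you and the paper actually use, is that every subtask available to $i$ (i.e., every $w\in T_o$) satisfies $w\preceq u^*_L$: if $w\in M_o$ this is by definition of $u^*_L$, and if $w\in T_o\setminus M_o$ then $w$ has a captive ancestor $c'\in M_o$, whence $w\prec c'\preceq u^*_L$ by Reachability and Monotonicity.

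Second, your ``decoupling'' of the two-dimensional deviation into (i) timing with $v=u^*_K$ fixed and (ii) task choice leaves a gap: when $u^*_L\notin K$ you only establish that revelation is irrelevant \emph{for $v=u^*_K$}, and your (ii) compares tasks only after revelation; neither covers the joint deviation ``delay and work on some $v\neq u^*_K$''. The paper sidesteps this by \emph{not} decoupling: in the captive case it compares the prescribed strategy directly against ``delay until $\tau_1$ while working on an arbitrary $v\in T_o$'', and shows the difference is a positive multiple of $R_{u^*_L}a_{-i}(u^*_L)-\gamma_i(v)$ (when $u^*_L$ is captive) or of $\gamma_i(u^*_L)-\gamma_i(v)$ (when $u^*_L\notin K$), both nonnegative by the corrected structural lemma. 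This is a minor restructuring of your argument, not a new idea.
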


The proof of Theorem \ref{thm:DAG}, which is provided below, is again based on a careful backward induction argument. The
induction argument, however, is more involved than in Theorem \ref{thm:LineNE}
-- in particular, when multiple subtasks are available to each agent, the
utility of the subgame for an agent depends on the strategies of all other
agents. We circumvent this by observing that under the given conditions, besides
following PPS, \emph{agents also have an incentive to choose the same subtask as
other agents}. We henceforth use the term \emph{herding} to refer to this
behavior. 

\begin{proof}[Proof of Theorem \ref{thm:DAG}]

In an acyclic subtask-network, a subtask $v$ can be solved only if an agent knows the solutions to all
subtasks corresponding to edges from which $v$ is reachable -- the set $L(v)$ of predecessors of $v$. We define the set of
\emph{valid knowledge-subgraphs} $\mathcal{V} \triangleq \{ P \subseteq [m]| v \in P \implies L(v) \subset P\}$ -- subtasks whose solutions may be jointly
known to agents at some time.

Finally, given that an agent knows the solution to subtasks $P\in\mathcal{V}$,
the complement $P^c=[m]\setminus P$ represents the subtasks still open for the
agent. We define the set $\mathcal{S}=\{T\subseteq
M\subseteq[m]:M^c,T^c\in\mathcal{V}\}$ -- essentially $\mathcal{S}$ contains all
nested-pairs of subtask-sets, whose complements are valid knowledge-subgraphs. 
For example, for a linear subtask-network, where subtasks $[m]$ are arranged in
increasing order, for any $l\leq k$, the pair of subgraphs
$\{l+1,\ldots,m\}\subseteq\{k+1,\ldots,m\}$ are elements of $\mathcal{S}$.
Recall that these were the exact subgraphs we used for induction in Theorem
\ref{thm:LineNE} -- we now do something similar over pairs of subgraphs in
$\mathcal{S}$.

Fixing agent $i$, we define $\G_{M,T}^i, (M,T)\in\mathcal{S}$ to be the subgame
where agent $i$ knows the solutions to subtasks $T^c$, while any agent $j\neq i$
knows the solutions to subtasks $M^c$. Note that $T\subseteq M$, and
$T^c\setminus M^c$ are the captive subtasks of $i$, i.e., those which $i$ has
solved, but not publicly shared. Also, for any set $M$ such that
$M^c\in\mathcal{C}$, we define $M_o\subseteq M$ to be the available subtasks in
$M$.

Given this set of subgames, we claim the following invariant: under the
conditions specified in the theorem, PPS $+$ herding is a best response for
agent $i$ in \emph{every} subgame $\G_{M,T}^i$, $(M,T)\in S$. As in Theorem
\ref{thm:LineNE}, we prove this via induction. We first partition the
set $\mathcal{S}$ as follows: given $1\leq k\leq l\leq m$
\begin{align*}
 \mathcal{S}_{l,k} \triangleq \{ (M,T)\in\mathcal{S}:|M|=l,|T|=k\}
\end{align*}
We abuse notation and say subgame $\G_{M,T}^i\in\mathcal{S}_{l,k}$ if
$(M,T)\in\mathcal{S}_{l,k}$ -- note that this corresponds to agent $i$ having
$k$ unsolved subtasks, while the remaining agents have $l\geq k$ unsolved
subtasks. We can restate the above invariant as follows -- for all $1\leq k\leq
l\leq m$, and for all games $\G_{M,T}^i\in\mathcal{S}_{l,k}$, PPS $+$ herding is
a best response for agent $i$. 

For the base case, consider the set of subgames $\G_{M,T}^i\in
\mathcal{S}_{1,1}$, i.e., subgames where all agents have a single open subtask.
Clearly PPS $+$ herding is a best response for $i$ in this case. For the
inductive step, fixing $(l,k)$, we assume the invariant is true for all smaller subgames
in $\mathcal{S}_{q,p}$ where $q<l$ and $p\leq k$ (i.e., agent $i$ has strictly
less open subtasks), or, $q=l$ and $p<k$ (i.e., agent $i$ has $k$ open subtasks,
but the remaining agents have strictly less open subtasks). 

Now, as in Theorem \ref{thm:LineNE}, we break the induction into two cases:
$(i)\,k<l$, and $(ii)\,k=l$. Note that $k<l$ corresponds to subgames
$\G_{M,T}^i$ with $T\subset M$ -- here we need to show that PPS, i.e., $i$
sharing solutions to captive subtasks $T^c\setminus M^c$ at $t=0$, is a best
response. On the other hand, in any subgame $\G_{M,M}^i\in\mathcal{S}_{k,k}$, we
need to show herding, i.e., $i$ chooses to work on subtask $v^*=\argmax_{u\in
M_o}\gamma_i(u)$. We note that we do not need to assume/show that $i$ follows
herding in subgame $\G_{M,T}^i$ -- we show that PPS is a best response
regardless of which subtask in $T_o$ that $i$ attempts.

\noindent$\bullet$ Case (i) $\,\,(\mathcal{S}_{l,k}, 1\leq k<l)$\\
Consider subgame $\G_{M,T}^i \in \mathcal{S}_{l,k}$. As in Theorem
\ref{thm:LineNE}, the only deviation strategies consist of agent $i$ choosing a
delay parameter $\tau>0$, and sharing her captive subtasks at time
$\min\{\tau,\tau_1\}$, where $\tau_1$ is the time of the first breakthrough in a
subtask, either by her, or some agent $j\neq i$. Suppose agent $i$ concentrates
her efforts on some subtask $v\in T_o$; further, from our induction hypothesis,
we know any agent $j\neq i$ concentrates on subtask $u^*=\argmax_{u\in
M_o}\gamma_j(u)$ -- note that this is the same subtask $\fall j\neq i$.

If $\tau=0$ or $\tau<\tau_1$, then the game reduces to $\G_{T,T}^i$. If
$\tau_1<\tau$ and agent $i$ solves subtask $v$ first, then by our induction
assumption, agent $i$ immediately shares her remaining subtasks and the game
reduces to $\G_{T-v,T-v}^i$, where we use the notation $T-v=T\setminus\{v\}$.
Finally, if $\tau_1<\tau$ and some agent $j\neq i$ solves subtask $u^*$, then
again agent $i$ follows PPS and the game reduces to $\G_{T-u^*,T-u^*}^i$ -- note
though that task $u^*$ could be in $T^c$ (i.e., its solution be known to agent
$i$). For the remaining proof, \emph{the only deviation of $i$ from PPS we
consider is $\tau=\infty$}. To see why this is sufficient, note that for any
$\tau$, in sample paths where $\tau<\tau_1$, agent $i$ shares all her captive
subsets (albeit with a delay) before any agent makes an additional breakthrough.
The memoryless property of exponential random variables implies that choosing
any $\tau<\infty$ only scales the difference in utility with and without PPS, by
a factor of 
$\PP[\tau_1<\tau]$\footnote{This fact is somewhat non-trivial, but can be
formally derived, following the same steps as in the proof of Theorem
\ref{thm:LineNE} -- we omit it for the sake of brevity.}.

As before, for subtask $u$ we define $a(u)=\sum_j a_j(u)$, and for any pair of
subtasks $u,w$, we define $a(i,u,w)= a_i(u)+a_{-i}(w)$. Now for any $\G_{M,T}^i
\in \mathcal{S}_{l,k}$, defining $U_{M,T}^i$ to be the expected reward earned by
agent $i$ by following PPS, we have:
\begin{align*}
 U^{i}_{M,T} = \sum_{u\in T^c\setminus M^c}R_u + \sum_{u\in T}
\frac{a_i(u)}{a(u)} R_u.
\end{align*}
Note also that by our induction hypothesis, this is true for any pair
$(P,Q)\in\mathcal{S}$ such that $P\subseteq M,Q\subset T$ or $P\subset M,Q=T$.

Let $U_{M,T}^{i,dev}$ to be the expected payoff of agent $i$ when she deviates
from PPS by waiting till $\tau_1$ before sharing solutions. Now if $u^*\in T^c$,
we have: 
$$U_{M,T}^{i,dev}=\frac{a_{-i}(u^*)}{a(i,u^*,v)}U_{M-u^*,T}^i +
\frac{a_i(v)}{a(i,u^*,v)}U_{M,T-v}^i$$
Subtracting from the expected utility under PPS, we get: 
\begin{align*}
 U_{M,T}^i - U_{M,T}^{i,dev} &= \frac{a_{-i}(u^*)}{a(i,u^*,v)}\left(U_{M,T}^i -
U_{M-u^*,T}^i \right)+ \frac{a_i(v)}{a(i,u^*,v)}\left(U_{M,T}^i -
U_{M,T-v}^i\right). \\
&= \frac{a_{-i}(u^*)}{a(i,u^*,v)}\left(R_{u^*} \right)-
\frac{a_i(v)}{a(i,u^*,v)}\left( \frac{a_{-i}(v)}{a(v)}R_v \right)\\&
\geq \frac{a_{-i}(u^*)}{a(i,u^*,v)}\left(\frac{a_i(u^*)}{a(u^*)}
R_{u^*} \right)- \frac{a_i(v)}{a(i,u^*,v)}\left( \frac{a_{-i}(v)}{a(v)}R_v
\right)=\frac{\gamma_i(u^*) - \gamma_i(v)}{a(i,u^*,v)}.
\end{align*}
\noindent On the other hand, if $u^* \notin T^c$, we have: 
\begin{align*}
U_{M,T}^{i,dev}&=\frac{a_{-i}(u^*)}{a(i,u^*,v)}U_{M-u^*,T-u^*}^i+\frac{a_i(v)}{
a(i,u^*,v)}U_{M,T-v}^i. 
\end{align*}
Again subtracting from $U_{M,T}^i$, we get:
\begin{align*}
U_{M,T}^i - U_{M,T}^{i,dev} &= \frac{a_{-i}(u^*)}{a(i,u^*,v)}\left(U_{M,T}^i
-U_{M-u^*,T-u^*}^i\right)+\frac{a_i(v)}{a(i,u^*,v)}\left(U_{M,T}^i-U_{M,T-v}
^i\right),\\
&=\frac{a_{-i}(u^*)}{a(i,u^*,v)}\left(\frac{a_i(u^*)}{a(u^*)}R_{u^*}
\right)-\frac{a_i(v)}{a(i,u^*,v)}\left( \frac{a_{-i}(v)}{a(v)}R_v
\right)=\frac{\gamma_i(u^*) - \gamma_i(v)}{a(i,u^*,v)}.
\end{align*}
From the induction hypothesis, we have that for all agents $j\neq i$, herding
implies that $\gamma_j(u^*)>\gamma_j(w)\fall w\in M_o$. Further, from the
monotonicity condition (i.e., all agents have the same ordering on virtual
rewards), we have that $\gamma_i(u^*)>\gamma_i(w)\fall w\in M_o$. On the other
hand, for any task $w\in M_o$ such that $v$ is reachable $w$, then from the
second condition in the theorem, we have that $\gamma_i(w)\geq\gamma_i(v)$.
Putting these together, we get that $\gamma_i(u^*)\geq\gamma_i(v)$; plugging
into the above inequalities, we get that $U_{M,T}^i - U_{M,T}^{i,dev}\geq 0$.
Thus following PPS is a best response for agent $i$.

\noindent$\bullet$ Case (ii). $\,\,(\mathcal{S}_{k,k}, k>1)$\\
Consider a subgame $\mathcal{G}_{M,M}^i\in\mathcal{S}_{k,k}$. In this case, we
need to show that $i$ follows herding, i.e., $i$ chooses available subtask
$u^*=\argmax_{w\in M_o}\gamma_i(w)$. Let $ U^{i}_{M,M}$ be the expected reward
earned by $i$ if she attempts $u^*$, and  $U^{i,dev}_{M,M}$ be her expected
reward if she attempts some other subtask $v\in M_o$. Note that due to our
induction hypothesis, all other agents concentrate on $u^*$, and share it once
they solve it. Now we have: 
\begin{align*}
U_{M,M}^i - U_{M,M}^{i,dev} &= \frac{a_{-i}(u^*)}{a(i,u^*,v)}\left(U_{M,M}^i
-U_{M-u^*,M-u^*}^i\right)+\frac{a_i(v)}{a(i,u^*,v)}\left(U_{M,M}^i-U_{M,M-v}
^i\right),\\
&=\frac{a_{-i}(u^*)}{a(i,u^*,v)}\left(\frac{a_i(u^*)}{a(u^*)}R_{u^*}
\right)-\frac{a_i(v)}{a(i,u^*,v)}\left( \frac{a_{-i}(v)}{a(v)}R_v
\right)=\frac{\gamma_i(u^*) - \gamma_i(v)}{a(i,u^*,v)}.
\end{align*}
Since $u^*=\argmax_{w\in M_o}\gamma_i(w)$, therefore herding is a best response
for agent $i$. This completes the proof.
\end{proof}

Theorem \ref{thm:DAG} can be summarized as stating that as long as virtual
rewards are monotone and have the same partial-ordering as induced by the
subtask-network, then \emph{PPS $+$ herding is a Nash equilibrium}. As with the
condition in Theorem \ref{thm:LineNE}, this result may be difficult to
interpret. Note however that given \emph{any acyclic subtask-network, and any
aptitudes matrix} with non-zero aptitudes, it gives a constructive way to choose
$\{R_u\}$ to ensure PPS. The observation that ensuring PPS may also result in
herding is an interesting outcome of our analysis -- such behavior has in fact
been observed in research settings, for example, 
such behavior in contests with incentives for partial-progress sharing is reported in \cite{boudreau2014cumulative}.

In the SA model, note that for every agent $i$, the virtual reward
$\gamma_i(u)\propto R_us_u$. Thus for \emph{any} rewards that are inversely proportional to the ``simplicities", the
monotonicity property is satisfied. Moreover, we obtain the following sharper
characterization of sufficient conditions for ensuring PPS:

\begin{theorem}
\label{thm:DAGwithSA}
Consider the treasure-hunt game under the SA model, on a directed acyclic
subtask-network with $m$ subtasks, associated simplicities 
$\{s_k\}_{k\in\mathcal{M}}$ and exogenous (non-zero) rewards 
$\{R_k\}_{k\in[m]}$, and $n$ agents with associated abilities
$\{a_i\}_{i\in[n]}$. If the rewards satisfy: 
\begin{itemize}[nolistsep,noitemsep]
 \item[$\bullet$] $v \mathrm{~is~reachable~from~} u \implies R_vs_v \leq
R_us_u$.
\end{itemize}
Then any set of strategies jointly constitute a Nash Equilibrium if:
\begin{itemize}[nolistsep,noitemsep]
\item[$\bullet$] Every agent follows the \emph{PPS} policy.
\item[$\bullet$] At any time, if $M_o$ denotes the set of open subtasks, then
for any agent $i$, the support of her choice distribution $\{x_i(u)\}_{u\in
M_o}$ is contained in  $\argmax_{u\in M_o}\{R_u s_u\}$.
\end{itemize}
Moreover, the resulting payoffs-vector is in the core.
\end{theorem}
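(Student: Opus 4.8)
The plan is to obtain Theorem~\ref{thm:DAGwithSA} by specializing Theorem~\ref{thm:DAG} to the SA model and then layering on a coalitional argument in the spirit of Theorem~\ref{thm:LineCore}. The first step is the computation of virtual rewards: with $a_i(u)=a_is_u$ and $a_{-i}(u)=a_{-i}s_u$ (writing $a_{-i}=\sum_{j\neq i}a_j$), one has
\[
\gamma_i(u)=\frac{R_u\,(a_is_u)(a_{-i}s_u)}{(a_i+a_{-i})s_u}=\frac{a_ia_{-i}}{a_i+a_{-i}}\,R_us_u .
\]
Thus for \emph{every} agent the virtual-reward ordering is the single common ordering by the products $\{R_us_u\}$, and the hypothesis ``$v$ reachable from $u\Rightarrow R_vs_v\le R_us_u$'' translates into ``$v$ reachable from $u\Rightarrow\gamma_i(v)\le\gamma_i(u)$''. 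When the $\{R_us_u\}$ are pairwise distinct this is exactly the hypothesis of Theorem~\ref{thm:DAG} (take $\prec$ to be the order by $R_us_u$), and PPS $+$ herding is already a Nash equilibrium. The genuine additional content of Theorem~\ref{thm:DAGwithSA} is that ties in $R_us_u$ are permitted, and that each agent's choice distribution may be supported \emph{anywhere} inside $\argmax_{u\in M_o}R_us_u$ rather than on one fixed maximiser.

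To accommodate this I would re-run the backward induction of Theorem~\ref{thm:DAG} over nested pairs $(M,T)\in\mathcal S$, but first establish the sharper subgame invariant that, in any profile of the stated form, agent $i$'s expected reward in subgame $\G^i_{M,T}$ equals
\[
U^i_{M,T}=\sum_{u\in T^c\setminus M^c}R_u+\frac{a_i}{A}\sum_{u\in T}R_u ,
\qquad A=\textstyle\sum_j a_j ,
\]
\emph{regardless} of how agents distribute effort within the argmax set. The identity that drives this, proved inside the same induction, is: if every agent $j$ works on some $w_j$ with $R_{w_j}s_{w_j}=\rho$ (the common maximal value), then the next breakthrough is made by agent $i$ with probability $a_is_{w_i}/\sum_j a_js_{w_j}$ while the reward it claims has expectation $\rho A/\sum_j a_js_{w_j}$; multiplying yields $\EE[\mathbf{1}\{i\text{ makes the next breakthrough}\}\,R_{w_i}]=\tfrac{a_i}{A}\,\EE[\text{reward of the next subtask solved}]$, which is precisely the cancellation making $U^i_{M,T}$ insensitive to coordination within the argmax set. (The SA simplification $a_i(u)/a(u)=a_i/A$ for every $u$ is what keeps this clean.) All subgames reached after a breakthrough are of the form $\G^i_{T-w,T-w}$ with $w$ in the argmax set, so they are covered by the induction hypothesis.

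With the invariant in hand, the deviation analysis is a re-tracing of Theorem~\ref{thm:DAG}: a delay deviation by $i$ (waiting past the next breakthrough before revealing her captive subtasks) changes her utility by a non-negative multiple of $\gamma_i(u^*)-\gamma_i(v)$, where $u^*$ is an argmax subtask the other agents work on and $v$ is the subtask $i$ works on, and the reachability condition plus maximality of $u^*$ force $\gamma_i(u^*)\ge\gamma_i(v)$ exactly as there, so PPS is a best response. Switching her task \emph{choice} from an argmax element to any other available subtask can only weakly decrease $\gamma_i$, hence weakly decrease her utility by the same identity, so every distribution supported on $\argmax_{u\in M_o}R_us_u$ is a best response — establishing the Nash-equilibrium claim. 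For the core, note that the condition $R_vs_v\le R_us_u$ references no agent at all, so it is preserved verbatim when a coalition $C$ is replaced by one ``super-agent'' of ability $a_C=\sum_{i\in C}a_i$, whose virtual reward $\gamma_C(u)=\frac{a_Ca_{-C}}{a_C+a_{-C}}R_us_u$ has the same ordering as every $\gamma_i$. Applying the argument above with $i$ replaced by $C$ shows PPS $+$ herding is a best response for $C$ against everyone else's PPS $+$ herding, and the invariant gives $v(C)=\frac{a_C}{A}\sum_uR_u=\sum_{i\in C}\frac{a_i}{A}\sum_uR_u=\sum_{i\in C}u_i$; hence no coalition can improve on the PPS payoff vector, which therefore lies in the core, exactly as in Theorem~\ref{thm:LineCore}.

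The step I expect to be the main obstacle is the middle one: proving that $U^i_{M,T}$ is invariant to how agents split their effort across the argmax set. Theorem~\ref{thm:DAG} delivers this only when the maximiser is unique, and for true ties one must verify the cancellation identity above and confirm that the post-breakthrough subgames stay within the induction hypothesis; once that is in place, the rest is a routine adaptation of the proof of Theorem~\ref{thm:DAG}.
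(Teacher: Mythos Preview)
Your proposal is correct and follows essentially the same route as the paper: both arguments hinge on the inductive proof that under the SA model $U^i_{M,T}=\sum_{u\in T^c\setminus M^c}R_u+\tfrac{a_i}{A}\sum_{u\in T}R_u$ is insensitive to how agents split effort within $\argmax_{u\in M_o}R_us_u$, followed by the same deviation comparison as in Theorem~\ref{thm:DAG} and the same ``coalition as super-agent'' reduction for the core. The only cosmetic differences are that the paper carries out the invariance calculation directly for general mixed strategies (via the rates $\lambda_u=\sum_{j\ne i}x_j(u)a_j$ and $r_u=x_i(u)a_i$) rather than arguing from the pure-strategy cancellation you sketch, and it handles the delay and task-choice deviations in a single combined computation rather than in two steps.
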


\begin{proof}
Consider a valid subgame $\G_{M,T}^i$ (as discussed in Theorem \ref{thm:DAG}).
Assume everyone but agent $i$ follows PPS $+$ herding -- given a set $M_o$ of
available subtasks, this means that for agent $j\neq i$, the support of
$\{x_j(u)\}_{u \in \M}$ is a subset of $S=\argmax_{u \in \M_o}\{R_us_u\}$. 
 
Suppose agent $i$ deviates by extending the support of $\{x_j(u)\}_{u \in \M}$
to a set $V$ of subtasks in $M_o\setminus S$. Define $\lambda_u = \sum_{j \in
\N/\{i\}} x_j(u)a_j$ to be the cumulative rate contributed by everyone except
agent $i$ on each $u \in S$, and $r_t = x_i(t)a_i$ to be the rate contributed by
agent $i$ on each $t \in S \cup V$ -- from definition, we have:
\begin{align*}
\sum_{u \in S} \lambda_u =  \sum_{u \in S} \sum_{j\neq i} x_j(u)a_j= \sum_{j\neq
i} a_j = a_{-i},\,
\sum_{u \in S} r_u + \sum_{v \in V}r_v=\sum_{u \in S \cup V}x_i(u)a_i=a_i
\end{align*}
For brevity, we define: $\alpha\triangleq \sum_{u \in S}s_u(r_u + \lambda_u) +
\sum_{v \in V} s_v r_v,\,c \triangleq R_us_u , ~\fall u \in S$. Now note that in
any game $\mathcal{G}_{M,T}^i$ agent $i$'s utility $U^i_{M,T}$ under PPS $+$
herding is given by:
\begin{align*}
U^i_{M,T} = \underset{t \in M/T}{\sum} R_t + U^i_{T,T}.
\end{align*}

We first show that the value $U^i_{T,T}$, after allowing for mixed strategies
with support in $S$, is the same as the value obtained in a situation where all
agents focus on the same task, i.e., $U^i_{T,T} = \underset{t \in T}{\sum}
\frac{a_t}{a}R_t$, where $a \triangleq \sum_{j \in \N} a_j$. This is definitely
true for all subgames in $S_{1,1}$. Assume it is true for all subgames in
$S_{k,k}$, where $1 \leq k < |T|$. Using this, we get, from standard properties
of exponential random variables:
\begin{align*}
U^i_{T,T} &= \underset{u \in S}{\sum} \frac{s_u r_u}{\alpha}R_u + \underset{u
\in S}{\sum} \frac{s_u(r_u + \lambda_u)}{\alpha}U^i_{T-u,T-u}  \\
&= \underset{u \in S}{\sum} \frac{r_u c}{\alpha}+\underset{u \in S}{\sum}
\frac{s_u(r_u+\lambda_u)}{\alpha} \left( \underset{t \in T-u}{\sum}
\frac{a_i}{a}R_t \right)\\
&= \underset{u \in S}{\sum} \frac{r_u c}{\alpha}+\underset{u \in S}{\sum}
\frac{s_u(r_u + \lambda_u)}{\alpha} \left( -\frac{a_u}{a}R_u + \underset{t \in
T}{\sum} \frac{a_i}{a}R_t \right) \\
  &= \underset{t \in T}{\sum} \frac{a_i}{a}R_t + \underset{u \in S}{\sum}
\frac{r_u c}{\alpha}-\underset{u \in S}{\sum} \frac{c(r_u + \lambda_u)}{\alpha}
\left( \frac{a_i}{a}  \right) \\
&= \underset{t \in T}{\sum} \frac{a_i}{a}R_t + \underset{u \in S}{\sum}
\frac{r_u c}{\alpha} \left( 1 - \frac{a_i}{a} \right)-\underset{u \in S}{\sum}
\frac{c\lambda_u}{\alpha} \left( \frac{a_i}{a}\right) \\
  &= \underset{t \in T}{\sum} \frac{a_i}{a}R_t + \underset{u \in S}{\sum}
\frac{r_u c}{\alpha} \left( \frac{a_{-i}}{a} \right)-\underset{u \in S}{\sum}
\frac{c\lambda_u}{\alpha} \left( \frac{a_i}{a}  \right) \\
&= \underset{t \in T}{\sum} \frac{a_i}{a}R_t,
\end{align*} 
Consequently, assuming $i$ follows PPS $+$ herding in $\mathcal{G}_{M,T}^i$, we
have:
\begin{align*}
U^i_{M,T} = \underset{t \in M/T}{\sum} R_t + \underset{t \in T}{\sum}
\frac{a_i}{a}R_t
\end{align*}

Having characterized the $i$'s value of any subgame $\G_{M,T}$ under PPS $+$
herding, we now need to show that any deviation from it reduces agent $i$'s
utility. This requires a more careful analysis as compared to Theorem
\ref{thm:DAG}, as we need to account for all mixed strategies. Further, some of
the subtasks in $S$ could belong to the set $M\setminus T$ of $i$'s
\emph{captive subtasks}. For this, we define: $S_i = S \cap T^c, S_{-i} = S \cap
T$. Clearly agent $i$ does not choose any subtask in $S_i$. Thus, the game
$\G_{M,T}^i$ transitions to smaller subgames as follows:
\begin{itemize}
\item $\fall u \in S_{-i}$ : $\G_{M,T}^i\rightarrow\G_{M,T-u}$ with probability
$\frac{s_u r_u}{\alpha} $, $\G_{M-u,T-u}$ with probability $\frac{s_u
\lambda_u}{\alpha}$. 
\item $\fall u^\prime \in S_i$ : $\G_{M,T}^i\rightarrow\G_{M-u^\prime,T}$ with
probability $\frac{s_{u^\prime} \lambda_{u^\prime}}{\alpha}$.
\item $\fall v \in V$ : $\G_{M,T}^i\rightarrow\G_{M,T-v}$ with probability $
\frac{s_v r_v}{\alpha}$.
\end{itemize}
From the expression we derive above for $U^i_{M,T}$, we have: $U^i_{M,T-u}=
U^i_{M,T} + R_u - \frac{a_i}{a}R_u,\, U^i_{M-u,T-u}= U^i_{M,T} -
\frac{a_i}{a}R_u,\,U^i_{M-u^\prime,T}=U^i_{M,T} - R_{u^\prime},\,U^i_{M,T-v}=
U^i_{M,T} + R_v - \frac{a_i}{a}R_v$. Now we can characterize the utility
$U^{i,dev}_{M,T}$ of agent $i$ in the case she deviates using the inductive
assumption that in every smaller subgame, agent $i$ follows PPS $+$ herding. We
present only the main calculations here, as the reasoning follows the same lines
as in the proof of Theorem
\ref{thm:DAG}:
\begin{align*}
 U^{i,dev}_{M,T} &= \underset{u \in S_{-i}}{\sum} \frac{s_u (r_u +
\lambda_u)}{\alpha}\left( \frac{r_u}{r_u + \lambda_u}U^i_{M,T-u} +
\frac{\lambda_u}{r_u + \lambda_u}U^i_{M-u,T-u}\right)\\
  &+ \underset{u^\prime \in S_{i}}{\sum} \frac{s_u \lambda_u}{\alpha}U^i_{M-u,T}
  + \underset{v \in V}{\sum} \frac{s_v r_v}{\alpha}U^i_{M,T-v} \\
   &= \underset{u \in S_{-i}}{\sum} \frac{s_u (r_u + \lambda_u)}{\alpha}\left(
\frac{r_u}{r_u + \lambda_u}\left(U^i_{M,T} + R_u - \frac{a_i}{a}R_u \right) 
   + \frac{\lambda_u}{r_u + \lambda_u} \left( U^i_{M,T} - \frac{a_i}{a}R_u
\right) \right)\\
  &+ \underset{u^\prime \in S_{i}}{\sum} \frac{s_u
\lambda_u}{\alpha}\left(U^i_{M,T} - R_u \right)
  + \underset{v \in V}{\sum} \frac{s_v r_v}{\alpha}\left(U^i_{M,T} + R_v -
\frac{a_i}{a}R_v \right)
\end{align*}
\noindent To complete the induction step, we need to show that
$U^{i,dev}_{M,T}\leq U^{i}_{M,T}$. We have:
\begin{align*}
  U^{i,dev}_{M,T} - U^{i}_{M,T}&= \underset{u \in S_{-i}}{\sum} \frac{s_u (r_u +
\lambda_u)}{\alpha}\left( \frac{r_u}{r_u + \lambda_u}\left(R_u -
\frac{a_i}{a}R_u \right) 
   + \frac{\lambda_u}{r_u + \lambda_u} \left(- \frac{a_i}{a}R_u \right)
\right)\\
  &+ \underset{u^\prime \in S_{i}}{\sum} \frac{s_u \lambda_u}{\alpha}\left(- R_u
\right)
  + \underset{v \in V}{\sum} \frac{s_v r_v}{\alpha}\left(R_v - \frac{a_i}{a}R_v
\right) \\
  &= \underset{u \in S_{-i}}{\sum} \frac{s_u r_u}{\alpha} \frac{a_{-i}}{a}R_u 
  - \underset{u \in S_{-i}}{\sum} \frac{s_u \lambda_u}{\alpha}
\frac{a_{i}}{a}R_u 
  - \underset{u^\prime \in S_{i}}{\sum} \frac{s_u \lambda_u}{\alpha} R_u 
  + \underset{u \in V}{\sum} \frac{s_v r_v}{\alpha} \frac{a_{-i}}{a}R_v 
 \end{align*} 

 Consider a subtask $v \in V$, one that agent $i$ devotes attention to outside
of $S$. One of the following conditions
 has to be satisfied,
 \begin{itemize}
  \item $v \in M_o/S$,
  \item $\exists t \in M_o$ such that $v$ is reachable from $t$
 \end{itemize}
In the former case, because of how we defined $S$, we have $R_vs_v < c$. In the
latter case, we have $R_ts_t < c$. And,
from the condition imposed in the hypothesis of the theorem, the fact that $v$
is reachable from $t$ implies that
$R_vs_v \leq R_ts_t < c$. Using this we can show that agent $i$ does not benefit
from deviation:
 \begin{align*}
  U^{i,dev}_{M,T} - U^{i}_{M,T} 
&< \frac{a_{-i}}{a\alpha} \left( \underset{u \in S_{-i}}{\sum} R_u s_u r_u 
  + \underset{v \in V}{\sum} R_v s_v r_v  \right)
  - \frac{a_i}{a\alpha} \left( \underset{u \in S_i}{\sum} R_u s_u \lambda_u 
  + \underset{u \in S_{-i}}{\sum} R_u s_u \lambda_u \right) \\
&< \frac{c a_{-i}}{a\alpha} \left( \underset{u \in S_i}{\sum} r_u 
  + \underset{v \in V}{\sum} r_v  \right)
  - \frac{c a_i}{a\alpha} \left( \underset{u \in S_{-i}}{\sum}  \lambda_u 
  + \underset{u \in S_{-i}}{\sum} \lambda_u \right) \\
&= \frac{c a_{-i}}{a\alpha} \left( a_i  \right)
  - \frac{c a_i}{a\alpha} \left( a_{-i} \right) = 0.
 \end{align*}
 
The fact that the resulting payoffs-vector is in the core follows from Theorem \ref{thm:DAGCore} in the Section \ref{appendix} where
we characterize a sufficient condition for the same with \emph{general-aptitudes}. 
This completes the proof.
\end{proof}

Note that in case of perfect proportional-allocation, agents are indifferent
between all available subtasks -- thus herding no longer occurs in this setting.

Finally, as in linear subtask-networks, we can extend the above result to get
conditions for PPS to be the unique equilibrium as shown in the following theorem.

\begin{theorem}
\label{thm:DAGDSIC}
Consider the Treasure-hunt game on a directed acyclic subtask-network $G$, and
the 
SA model with simplicities $\{s_u\}_{u \in [m]}$, and rewards $\{R_u\}_{u \in
[m]}$. 
Define $\forall u \in [m]$, $\beta \triangleq \sum_{i \in L}a_i$, where $L
\in [n]$ is the set of stackelberg players who commit to PPS ex-ante.
Suppose the rewards-vector $\{R_u\}$ satisfies the following
conditions:
\begin{itemize}[nolistsep,noitemsep]
 \item[$\bullet$] There exists a strict total ordering $\prec$ on subtasks $[m]$
such that for every agent $i$ and pair of 
 subtasks $u,v$, we have $v \prec u\iff \frac{R_u s_u}{R_v s_v} >
\frac{a_{-i}}{\beta}$. 
 \item[$\bullet$] For all pairs of subtasks $(u,v)$ such that $v$ is reachable
from $u$ in $G$, we have $v\prec u$.
\end{itemize}
Then the following strategies together constitute a unique Nash
equilibrium:
\begin{itemize}[nolistsep,noitemsep]
\item[$\bullet$] Every agent implements the \emph{PPS} policy.
\item[$\bullet$] At any time, if $M_o\subseteq [m]$ is the set of available
subtasks, 
then every agent $i$ chooses to work on the unique subtask $u^*=\argmax_{u\in
M_o}R_u s_u$
\end{itemize}
as long as agents in $L$ do the above.
\end{theorem}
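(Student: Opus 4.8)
The plan is to mirror the backward-induction arguments behind Theorems \ref{thm:DAG} and \ref{thm:DAGwithSA}, used twice: once to re-establish that the displayed profile is an equilibrium given the commitment of the agents in $L$, and once more -- the new content -- to rule out every other equilibrium compatible with that commitment. The first part follows from Theorem \ref{thm:DAGwithSA}: the hypotheses force the products $R_u s_u$ to be pairwise distinct, make $\prec$ the ordering of subtasks by decreasing $R_u s_u$, and place reachable subtasks lower in $\prec$; thus the monotonicity and reachability conditions of that theorem hold, and the displayed profile is a Nash equilibrium whose payoff-vector is in the core. So essentially all the effort goes into uniqueness.

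For uniqueness I would show that, provided the agents in $L$ honour their commitment to PPS $+$ herding, the prescribed action -- share captive subtasks with zero delay, and work on $u^* = \argmax_{w \in M_o} R_w s_w$ -- is the \emph{unique} best response for every agent $i \notin L$ at every decision point, by induction over the subgame lattice $\mathcal{S}$ of nested pairs $(M,T)$ of complements of valid knowledge-subgraphs, ordered by $(|M|,|T|)$ as in Theorem \ref{thm:DAG}. The base case $\mathcal{S}_{1,1}$ is immediate. For the inductive step in a subgame $\G_{M,T}^i$, I would bound agent $i$'s payoff from an arbitrary deviation -- withholding some captive subtask for a delay $\tau$, or putting positive effort on an available $v \neq u^*$ -- against her payoff from the prescribed action. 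The memorylessness reduction of Theorems \ref{thm:DAG} and \ref{thm:DAGwithSA} lets me take $\tau = \infty$ and a one-shot effort reallocation; using the closed form $U^i_{M,T} = \sum_{t \in M \setminus T} R_t + \sum_{t \in T} \frac{a_i}{a} R_t$ for the continuation, the payoff difference collapses to a combination of terms $R_{u^*} s_{u^*} - R_v s_v$ weighted by transition rates.

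The one genuinely new ingredient is that in an arbitrary equilibrium the agents outside $L$ need not be herding on $u^*$ -- their behaviour is precisely what we are pinning down -- so the opposing force that $i$ can \emph{count} on being concentrated against her on the high-priority subtask is only $\beta = \sum_{j \in L} a_j$, not the full $a_{-i}$. Carrying this worst case through is what converts the weak inequalities of Theorem \ref{thm:DAGwithSA} into the strict ones needed here: the hypothesis $R_{u^*} s_{u^*} / (R_v s_v) > a_{-i}/\beta$ is exactly the statement that even when only the Stackelberg rate $\beta s_{u^*}$ opposes $i$ on $u^*$ and $i$'s competition on $v$ is as favourable to a deviation as possible, $i$ still strictly prefers to herd on $u^*$ and to share without delay; together with $v \prec u^*$ for every $v$ reachable from $u^*$, this yields $U^i_{M,T} - U^{i,\mathrm{dev}}_{M,T} > 0$ strictly in every subgame, so every equilibrium compatible with the commitment of $L$ coincides with the displayed profile.

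I expect the main obstacle to be exactly this worst-case accounting over the non-$L$ agents: one must verify that redirecting or spreading their effort -- including onto captive subtasks of $i$ that happen to coincide with the herded subtask (the $S_i$ versus $S_{-i}$ split of Theorem \ref{thm:DAGwithSA}) -- never helps a deviator, so that the single scalar threshold $a_{-i}/\beta$ simultaneously dominates all these cases, with every inequality in the chain staying strict. A secondary subtlety is that a Nash equilibrium need not be subgame perfect, so some care is needed to legitimise the backward induction over $\mathcal{S}$ -- e.g.\ by restricting to on-path subgames and checking that off-path continuation play cannot be used to sustain an alternative equilibrium.
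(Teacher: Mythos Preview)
Your proposal is correct and follows essentially the same route as the paper: backward induction over the subgame lattice $\mathcal{S}$, splitting into the PPS case ($k<l$) and the herding case ($k=l$), and using that only the Stackelberg rate $\beta$ is guaranteed to oppose a deviator on $u^*$, so that the threshold $R_{u^*}s_{u^*}/(R_v s_v) > a_{-i}/\beta$ yields a strict payoff gap. The paper treats the PPS step by reference to the line-network Stackelberg argument (Theorem~\ref{thm:LineDSIC}) and carries out the herding step explicitly under the simplifying assumption of a single non-$L$ competitor on an arbitrary subtask~$w$; your anticipated worst-case accounting over all non-$L$ agents and the $S_i/S_{-i}$ split is the natural generalisation of that calculation, and the subgame-perfection caveat you flag is a point the paper leaves implicit.
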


\begin{proof}
Given an agent $i$, we define $\mathcal{G}_{M,T}^i, (M,T)\in\mathcal{S}$ to be the
subgame
where agent $i$ knows the solutions to subtasks $T^c$, while the Stackelberg
agents $j \in L$
knows the solutions to subtasks $M^c$, i.e., $M^c$ is the set of subtasks that
have been publicly shared collectively.
Note that $T\subseteq M$, and
$T^c\setminus M^c$ are the captive subtasks of $i$, i.e., those which $i$ has
solved, but not publicly shared. Also, for any set $M$ such that
$M^c\in\mathcal{C}$, we define $M_o\subseteq M$ to be the available subtasks in
$M$ according to the dependencies in the graph.

We claim the following invariant: under the
conditions specified in the theorem, PPS is a best response for every 
agent $i \in [n] \setminus L$ in \emph{every} subgame $\mathcal{G}_{M,T}^i$, $(M,T)\in
S$. Using notation
from the proof of Theorem \ref{thm:DAG}, we say subgame
$\mathcal{G}_{M,T}^i\in\mathcal{S}_{l,k}$ if
$(M,T)\in\mathcal{S}_{l,k}$ -- note that this corresponds to agent $i$ having
$k$ unsolved subtasks, while the Stackelberg agents have $l\geq k$ unsolved
subtasks. We can restate the above invariant as follows -- for all $1\leq k\leq
l\leq m$, and for all games $\mathcal{G}_{M,T}^i\in\mathcal{S}_{l,k}$, PPS is
a best response for every agent $i \in [n] \setminus L$. 

For the base case, consider the set of subgames $\mathcal{G}_{M,T}^i\in
\mathcal{S}_{1,1}$, i.e., subgames where all agents have a single open subtask.
Clearly PPS is a best response for $i$ in this case. For the
inductive step, fixing $(l,k)$, we assume the invariant is true for all smaller
subgames:
\begin{itemize}
 \item $\mathcal{G}_{M,T}^i \in \mathcal{S}_{q,p}$ where $q<l$ and $p\leq k$,
 \item $\mathcal{G}_{M,T}^j \in \mathcal{S}_{q,p}$ where $j \neq i$, and $q<l$ and $p\leq
k$.
\end{itemize} 

Similar to the proof of Theorem \ref{thm:DAG}, we need to prove two things assuming our invariant holds for smaller games:
\begin{enumerate}
 \item that ``PPS'' part of the invariant holds for subgames in $\mathcal{S}_{l,k}$ where $k<l$, and 
 \item given that (1) holds, we need to prove the ``herding'' condition holds for subgames of the type $\mathcal{S}_{k,k}$. 
\end{enumerate}

We can prove (1) in a way similar way to the proof of Theorem \ref{thm:LineDSIC}. We will here prove (2).
So effectively, given that the invariant holds for all smaller games, and given ``PPS'', we need to show that, for a given 
agent $i$, his best response in the subgame $\mathcal{G}_{T,T}$ is to choose the subtask $u \triangleq u^*(T)$. For simplicity of analysis,
assume there is just one agent $j \notin L \cup \{i\}$ who is on some subtask $w$(possibly $w=u$).

In the scenario described above, let $U_i(u)$ and $U_i(v)$ denote the utility obtained by $i$ when she chooses 
subtask $u$ and some $v \neq u$ respectively. Following our usual notation, $U^i_{T,T}$ represents $i$'s value of the subgame
$\G_{M,T}$ under ``PPS + herding''.
From similar calculations we did earlier, we have that
\begin{align*}
 U^i(v) - U^i_{T,T} &= \frac{a_i(v)}{a_i(v)+\beta(u)+a_j(w)}\frac{a_{-i}(v)}{a_i(v)+a_{-i}(v)}R_v 
 - \frac{\beta(u)}{a_i(v)+ \beta(u) + a_i(w)}\frac{a_{i}(u)}{a_i(u)+a_{-i}(u)}R_u \\
 &- \frac{a_j(w)}{a_i(v)+ \beta(u) + a_i(w)}\frac{a_{i}(w)}{a_i(w)+a_{-i}(w)}R_w \\
 & \leq \frac{1}{a_i(v)+ \beta(u) + a_i(w)}\left( \frac{a_ia_{-i}}{a_i+a_{-i}}R_vs_v - \frac{a_i\beta}{a_i+a_{-i}}R_us_u \right) \\
 & \leq 0
\end{align*}
\begin{align*}
 U^i(u) - U^i_{T,T} &= \frac{a_i(u)}{a_i(u)+\beta(u) +a_j(w)}\frac{a_{-i}(u)}{a_i(u)+a_{-i}(u)}R_u 
 - \frac{\beta}{a_i(u)+ \beta(u) + a_i(w)}\frac{a_{i}(u)}{a_i(u)+a_{-i}(u)}R_u \\
 &- \frac{a_j(w)}{a_i(u)+ \beta(u) + a_i(w)}\frac{a_{i}(w)}{a_i(w)+a_{-i}(w)}R_w \\
 & = \frac{1}{a_i(u)+\beta(u) +a_j(w)} \left( \frac{a_ia_{-i}}{a_i+a_{-i}}R_us_u - \frac{a_i\beta}{a_i+a_{-i}}R_us_u 
 -\frac{a_i a_j}{a_i+a_{-i}}R_ws_w\right) \\
 &= \frac{1}{a_i(u)+\beta(u) +a_j(w)} \frac{a_i}{a_i+a_{-i}}\left((a_{-i}-\beta)R_u s_u - a_jR_ws_w\right) \\
 &=  \frac{1}{a_i(u)+\beta(u) +a_j(w)} \frac{a_i}{a_i+a_{-i}}\left(a_jR_u s_u - a_jR_ws_w\right) \\
 & \geq 0
 \end{align*}
The above calculations show that $U^i(u) > U^i(v)$ and that clinches the proof.
\end{proof}
\section{The Efficiency of PPS in the Treasure-Hunt Game}
\label{sec:poa}

Having obtained conditions for incentivizing PPS, we next turn to the problem of
minimizing the \emph{expected makespan} of a multi-stage project. The importance
of PPS is clear in linear subtask-networks, where \emph{ensuring PPS is
necessary and sufficient for minimizing the expected makespan}. Essentially,
under PPS, all agents work together on a single current subtask, and thus each
subtask gets solved at a rate equal to the sum of rates of the agents. On the
other hand, as we show in Example \ref{eg:poanopps}, in settings without PPS,
each subtask is solved at a rate close to that of the maximum single-agent rate.

An ideal result would be if ensuring PPS alone is sufficient to minimize
makespan in all acyclic subtask-networks. However the following example shows
this is not possible:

\begin{example}
\label{eg:poawithpps}
Consider a treasure-hunt game with $m$ parallel subtasks and $m$ agents. We
index agents and subtasks with unique labels from $[m]$. Suppose the aptitudes
are given by: $a_i(u)=1$ if $i=u$, else $1/(m-1)$. From Theorem \ref{thm:DAG},
we know that we can set rewards so as to ensure that all agents follow PPS $+$
herding, i.e., they solve the subtasks as a group in a fixed order. Thus we have
that $\EE[T_{PPS}]=\frac{m}{1+(m-1)/(m-1)}=m/2$. On the other hand, suppose
agent $i$ starts on task $u=i$, and stops after finishing it -- clearly this is
an upper bound on the optimal makespan, and so we have
$\EE[T_{OPT}]=\frac{1}{m}+\frac{1}{m-1}+\ldots +1=\Theta(\log m)$. 
\end{example}

Thus, in the equilibrium wherein all agents follow PPS, the expected makespan is
greater than the optimal by a multiplicative factor of $\Omega(m/\log m)$.
Conversely, however, we can show that this is in fact the worst possible ratio
up to a logarithmic factor:

\begin{theorem}
\label{thm:POAgen}
Given an acyclic subtask-network G with edge set $[m]$, suppose the
rewards-vector satisfy the conditions in Theorem \ref{thm:DAG}. Then the
expected makespan of the corresponding equilibrium where all agents follow PPS
is no worse than $m$ times the optimal makespan, irrespective of the number of
agents.
\end{theorem}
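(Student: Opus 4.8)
The plan is to compare the PPS-equilibrium makespan against the optimal makespan task-by-task along a single path through the DAG, exploiting the fact that under PPS all agents herd on one available subtask at a time and so each subtask gets solved at the aggregate rate $a=\sum_j a_j(u)$ for whichever agent-subtask rates apply. First I would set up notation: let $\topt$ denote the optimal expected makespan over all strategy profiles (including non-PPS ones), and let $\tpps$ denote the expected makespan in the Theorem~\ref{thm:DAG} equilibrium. The key structural observation is that any schedule — in particular the optimal one — must, for every directed path $v_1\to v_2\to\cdots\to v_k$ in $G$, solve those subtasks in order, and subtask $v_{j+1}$ cannot be started by any agent until $v_j$ is globally known (under PPS) or until that agent personally knows $v_j$'s solution (in general). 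Hence for \emph{any} policy, the makespan is at least the time to traverse the ``bottleneck'' chain, and since each single edge $u$ is solved by the collaborating pool no faster than rate $a=\sum_j a_j(u)$ only when everyone herds, whereas a single agent solves it at rate at most $\max_i a_i(u)\le a$, the optimal makespan is at least $\sum_{j}\tfrac{1}{\max_i a_i(v_j)}\cdot(\text{something})$ along any maximal chain.

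Concretely, the comparison I would run is the following. Under the PPS~$+$~herding equilibrium, the $m$ subtasks are solved one group at a time in some fixed total order $u_1\prec u_2\prec\cdots$ (the $\gamma$-ordering), each taking expected time $1/a(u_t)$ where $a(u_t)=\sum_j a_j(u_t)$; so $\EE[\tpps]=\sum_{t\in[m]} 1/a(u_t)$. For the lower bound on $\topt$, consider any valid completion: the last subtask to be completed has some directed chain of predecessors, but more simply, since every subtask must be solved by \emph{some} agent, and an agent solves subtask $u$ at rate at most $\sum_j a_j(u)=a(u)$ even pooling all effort, the expected time for \emph{each individual} subtask $u$ to get solved, counted from when it becomes available, is at least $1/a(u)$; but these cannot all overlap freely because of the acyclic dependency — in particular there is at least one subtask whose completion is ``serialized'' after the others on its in-chain. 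The cleanest route: pick the subtask $u_m$ that is last in the topological order used for PPS (a sink-reachable one), take a maximal directed path ending at it, and observe $\EE[\topt]\ge \sum_{u \text{ on that path}} 1/a(u)$; combined with the fact that the PPS ordering solves tasks in non-increasing $\gamma$ order and there are only $m$ tasks each contributing at most $\max_t 1/a(u_t)$, one gets $\EE[\tpps]\le m\cdot\max_t \tfrac{1}{a(u_t)} \le m\cdot \EE[\topt]$, using that the maximal-$1/a$ task alone forces $\EE[\topt]\ge \max_t 1/a(u_t)$ since that task must be solved by someone at rate $\le a(u_t)$.

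I would then assemble these: $\EE[\tpps]=\sum_{t=1}^m \tfrac1{a(u_t)} \le m\max_t\tfrac1{a(u_t)}$, and $\EE[\topt]\ge \max_t \tfrac1{a(u_t)}$ because in \emph{any} policy the task $u^\star$ achieving that maximum must be completed, and the rate at which it is completed — pooling all $n$ agents' effort onto it — is at most $a(u^\star)$, so its expected completion time (hence the makespan) is at least $1/a(u^\star)$. Dividing gives $\EE[\tpps]\le m\,\EE[\topt]$, independent of $n$. The main obstacle I anticipate is making the lower bound $\EE[\topt]\ge\max_t 1/a(u_t)$ fully rigorous: one must argue that no clever non-PPS strategy can complete a given subtask faster than the pooled exponential rate allows — this follows because the completion time of subtask $u$ is the minimum over agents of independent $\mathrm{Exp}$ clocks whose rates sum to at most $a(u)$ (each agent devotes a fraction $x_i(u)\le 1$ of unit effort), so it stochastically dominates $\mathrm{Exp}(a(u))$, giving expectation $\ge 1/a(u)$ — and care is needed because availability times are themselves random, but monotonicity lets one ignore that (waiting only delays completion). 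A secondary subtlety is confirming that the PPS equilibrium really does solve exactly $m$ distinct subtasks serially with no idle overlap, which is exactly the herding structure established in Theorem~\ref{thm:DAG}.
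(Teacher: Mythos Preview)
Your final assembled argument is correct and is considerably more direct than the paper's. You compute $\EE[\tpps]=T_H([m])=\sum_{u\in[m]}1/a(u)$ under the herding equilibrium, bound this crudely by $m\cdot\max_u 1/a(u)$, and then lower-bound $\EE[\topt]$ by $\max_u 1/a(u)$ via the observation that the hardest task $u^\star$ must be completed and the instantaneous aggregate rate devoted to it never exceeds $a(u^\star)=\sum_i a_i(u^\star)$, so its completion time stochastically dominates an $\mathrm{Exp}(a(u^\star))$ variable regardless of the policy. Division yields the factor-$m$ bound. The paper instead proves by backward induction on $k=|M|$ the finer invariant $\EE[T_{OPT}(M)]\ge \tfrac{1}{k}T_H(M)$ for every valid remaining subtask-set $M$, threading the optimal allocation $\{\lambda_u^{OPT}\}$ through the one-step recursion for $T_{OPT}$; instantiating at $k=m$ gives the same conclusion. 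The paper's invariant is strictly stronger (it holds at every subgame and compares $T_{OPT}$ to the full herding sum rather than just its largest term), and its recursion is set up for centralized allocations; your argument is more elementary, needs no induction, and works verbatim against arbitrary non-PPS policies. The initial discussion of maximal directed chains in your writeup is unnecessary --- the single-hardest-task lower bound already suffices --- so you can drop it.
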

\begin{proof}[Proof of Theorem \ref{thm:POAgen}]
As in the proof of Theorem \ref{thm:POA}, given subtask-network $G$, we define
$\mathcal{V}$ as the set of
valid knowledge-subgraphs. Further, for any $M\subseteq [m]$ with
$M^c\in\mathcal{V}$, we defined $M_o\subseteq M$ to be the set of available
tasks. 

Let $OPT$ denote the optimal \emph{centralized} agent-subtask allocation
algorithm. In particular, for any subgame $\mathcal{G}_M$ (i.e.,wherein
solutions for all subtasks in knowledge-subgraph $M^c\in\mathcal{V}$ are
public), $OPT$ chooses the agent-subtask distribution matrix
$\{x^{OPT}_i(u)\}_{i\in[n],u\in M_o}$ that minimizes the expected makespan of
the remaining subtasks. Let $T_{OPT}(M)$ denote the expected time achieved by
$OPT$ in subgame $\mathcal{G}_M$, and for each subtask $u\in M_o$, let
$\lambda^{OPT}_u =\sum_{i \in [n]} x^{OPT}_i(u)a_i(u)$. 

For any arbitrary work-conserving strategy $\{x_i(u)\}_{u\in M_o}$, with
corresponding $\lambda_u$, we claim that subgame $\mathcal{G}_{M}$ reduces to
the subgame $\mathcal{G}_{M-u}$ with probability $\frac{\lambda_u}{\sum_{u\in
M_o}\lambda_u}$, after an expected time of $\frac{1}{\sum_{u\in
M_o}\lambda_u}$. This follows from standard properties
of the exponential distribution. Thus, we have:
\begin{align}
\label{eq:DAG_Topt}	
\EE[T_{OPT}(M)] &= \frac{1}{\sum_{u\in M_o}\lambda^{OPT}_u} + \sum_{u\in M_o}
\frac{\lambda^{OPT}_u}{\sum_{u\in M_o}\lambda^{OPT}_u} \EE[T_{OPT}(M-u)]
\end{align}
This follows from the linearity of expectation, and the fact that $OPT$
chooses the optimal agent-subtask distribution in each subgame.

For a given subgame-network with subtasks $M$, we define the \emph{herding-time}
as:
\begin{align}
\label{eq:DAG_Th}
T_H(M)=\sum_{u\in M}\frac{1}{a(u)}. 
\end{align}
where $a(u) \triangleq \sum_{i\in[n]}a_i(u)$.

Assuming the rewards-vector satisfies the conditions of Theorem \ref{thm:DAG},
the makespan achieved by the 
resulting equilibrium is exactly $T_H(M)$.
Let $\mathcal{S}_k
=\{\mathcal{G}_M :M^c\in\mathcal{V},|M|=k\}$ be the set of all valid subgames
with $k$ remaining subtasks, for $1\leq k \leq m$. We now claim
the following invariant: for every subgraph $M \in \mathcal{S}_k$, we
have $\EE[T_{OPT}(M)] \geq \frac{1}{k} T_H(M)$.

We now prove the above invariant via induction.  For $k=1$, the invariant is
trivially, since the subgames in $S_{1}$ have only a single subtask left. Now
assume the invariant is true for $\fall M\in \mathcal{S}_l$, $l\leq k$. Consider
a subgame $M\in\mathcal{S}_{k+1}$. Applying the induction hypothesis, we have
$\fall u \in M_o,
\EE[T_{OPT}(M-u)] \geq \frac{1}{k}T_H(M-u)$. Now from equations
\ref{eq:DAG_Topt} and \ref{eq:DAG_Th}, we have for $T_{OPT}$:
\begin{align*}
 \EE[T_{OPT}(M)] &= \frac{1}{\sum_{u\in M_o}\lambda^{OPT}_u}  
 + \sum_{u\in M_o} \left[ \frac{\lambda^{OPT}_u}{\sum_{u\in
M_o}\lambda^{OPT}_u} \EE[T_{OPT}(M-u)]\right]\\
&\geq \frac{1}{\sum_{u\in M_o}\lambda^{OPT}_u}  
 + \frac{1}{k}\sum_{u\in M_o} \left[ \frac{\lambda^{OPT}_u}{\sum_{u\in
M_o}\lambda^{OPT}_u} T_{H}(M-u)\right]\\
&= \frac{1}{\sum_{u\in M_o}\lambda^{OPT}_u}  
 + \frac{1}{k}\sum_{u\in M_o} \left[ \frac{\lambda^{OPT}_u}{\sum_{u\in
M_o}\lambda^{OPT}_u} \left( T_{H}(M) - \frac{1}{a(u)}\right) \right] \\
&= \frac{1}{k}T_{H}(M) + \frac{1}{\sum_{u\in M_o}\lambda^{OPT}_u}
\left[ 1 - \frac{1}{k}\sum_{u\in M_o}\frac{\lambda_u^{OPT}}{a(u)}\right].
\end{align*}
Using $c(u) \triangleq \frac{\lambda_u^{OPT}}{a(u)}$, and equation
\ref{eq:DAG_Topt} we get:

\begin{align*}
 \frac{\EE[T_{OPT}(M)]}{T_{H}(M)} &\geq \frac{1}{k} + \frac{1}{\left(\sum_{u\in
M_o}\frac{1}{a(u)}\right)
 \left(\sum_{u\in M_o}c(u)a(u)\right)}
 \left[ 1 - \frac{1}{k}\sum_{u\in M_o}c(u) \right] \\
 &= \frac{1}{k} + \frac{\sum_{u\in M_o}c(u)}{\left(\sum_{u\in
M_o}\frac{1}{a(u)}\right)
 \left(\sum_{u\in M_o}c(u)a(u)\right)}
 \left[ \frac{1}{\sum_{u\in M_o}c(u)} - \frac{1}{k} \right]
\end{align*}
Because $0 \leq c(u) \leq 1$ and $|M_o| \leq k+1$, we have $\sum_{u\in M_o}c(u)
\leq k+1$, and consequently:
\begin{align*}
 \frac{\EE[T_{OPT}(M)]}{T_{H}(M)} &\geq \frac{1}{k} + \frac{\sum_{u\in
M_o}c(u)}{\left(\sum_{u\in M_o}\frac{1}{a(u)}\right)
 \left(\sum_{u\in M_o}c(u)a(u)\right)}
 \left[ \frac{1}{k+1} - \frac{1}{k} \right] \\
 &\geq \frac{1}{k} + \left[ \frac{1}{k+1} - \frac{1}{k} \right] \\
 &= \frac{1}{k+1}.
\end{align*}
This completes the inductive argument, and hence the proof.
\end{proof}

Recall in Example \ref{eg:poanopps}, we show that the expected makespan
\emph{without PPS} can be greater by an $\Omega(n)$ factor. Theorem
\ref{thm:POAgen} shows that under PPS, the makespan can be off by a factor
depending on the number of tasks, but \emph{not the number of agents}. This is
significant in large collaborative projects, where the number of agents often far
exceeds the number of tasks. Moreover, under the SA model, we get a surprising optimality
result:

\begin{theorem}
\label{thm:POA}
Given any acyclic subtask-network $G$ under the SA model, with edge set $[m]$
and agents $[n]$. Then any corresponding equilibrium where all agents follow PPS
also minimizes the expected makespan.
\end{theorem}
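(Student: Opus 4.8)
The plan is to show that under the SA model \emph{every} work-conserving effort-allocation strategy consistent with PPS yields exactly the herding makespan $T_H(\cdot)$ of \eqref{eq:DAG_Th}, and that this common value equals the centralized optimum; together these give the theorem. Throughout write $A\triangleq\sum_{i\in[n]}a_i$, so that under the SA model $a(u)=\sum_i a_i(u)=As_u$ and hence $T_H(M)=\sum_{u\in M}\tfrac1{a(u)}=\tfrac1A\sum_{u\in M}\tfrac1{s_u}$. I would reuse the state space from the proof of Theorem~\ref{thm:POAgen}: valid knowledge-subgraphs, and for a remaining-task set $M$ (with $M^c$ valid) the available set $M_o\subseteq M$.

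The one structural observation needed is this. When all agents follow PPS they share identical knowledge at all times, hence face the same available set $M_o$; if agent $i$ splits her effort as $\{x_i(u)\}_{u\in M_o}$ then the aggregate rate on subtask $u$ is $\lambda_u=s_u\rho_u$, where $\rho_u\triangleq\sum_i x_i(u)a_i$, and work-conservation ($\sum_{u\in M_o}x_i(u)=1$) forces $\sum_{u\in M_o}\rho_u=\sum_i a_i=A$. This is the only place separability enters, and it is exactly what makes the computation collapse.

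The heart of the argument is an induction on $|M|$ showing $\EE[T(M)]=T_H(M)$ for \emph{any} work-conserving PPS strategy. The base case $|M|=0$ is trivial. For the step: by memorylessness the first completion occurs after an $Exp(\Lambda)$ time with $\Lambda\triangleq\sum_{u\in M_o}\lambda_u$, and is subtask $u$ with probability $\lambda_u/\Lambda$; the state then becomes $M-u$, and the (still work-conserving) continuation has expected remaining makespan $T_H(M-u)=T_H(M)-\tfrac1{As_u}$ by the induction hypothesis (which, being stated for the whole class of work-conserving strategies on smaller states, applies verbatim to adaptive or mixed continuations — so no extra work is needed there). Hence $\EE[T(M)]=\tfrac1\Lambda+\sum_{u\in M_o}\tfrac{\lambda_u}{\Lambda}\bigl(T_H(M)-\tfrac1{As_u}\bigr)=\tfrac1\Lambda+T_H(M)-\tfrac1{\Lambda A}\sum_{u\in M_o}\tfrac{\lambda_u}{s_u}$, and now the separable identity $\lambda_u/s_u=\rho_u$ gives $\sum_{u\in M_o}\lambda_u/s_u=\sum_{u\in M_o}\rho_u=A$, so the last term equals $1/\Lambda$ and cancels, leaving $\EE[T(M)]=T_H(M)$.

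Finally I would close the loop. Any equilibrium in which all agents follow PPS is a work-conserving PPS strategy (agents are committed to working on the project), so its makespan is $T_H([m])$. For the matching lower bound, note that the centralized optimum may be taken work-conserving without loss: any idle capacity can be shifted onto an available subtask, which by a standard coupling/stochastic-dominance argument only decreases that subtask's completion time and hence speeds up the whole project. So the same computation gives $\EE[T_{OPT}([m])]=T_H([m])$, and the PPS equilibrium attains the optimal expected makespan. The only points requiring care are the reduction to work-conserving $OPT$ and the validity of the one-step recursion under adaptive policies; both follow from memorylessness together with phrasing the induction over the entire class of work-conserving strategies, while the exact cancellation — the substantive content — is precisely where the separable-aptitudes structure does all the work.
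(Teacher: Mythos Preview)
Your proposal is correct and follows essentially the same route as the paper: induct on $|M|$, use the one-step recursion from memorylessness, and observe that under the SA model work-conservation forces the effort weights (your $\rho_u$, the paper's $\lambda_u$) to sum to $A$, which produces the exact cancellation yielding $T_H(M)$ for every work-conserving allocation, hence for both the PPS equilibrium and for $OPT$. The only cosmetic differences are notational (you fold the $s_u$ into the rate), and that you make the work-conservation of $OPT$ explicit whereas the paper leaves it implicit.
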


In particular, setting rewards to satisfy the conditions in Theorem
\ref{thm:DAGwithSA} \emph{minimizes the makespan}. We now look at the proof of this result.
\begin{proof}[Proof of Theorem \ref{thm:POA}]
As before, given subtask-network $G$, we define $\mathcal{V}$ as the set of
valid knowledge-subgraphs. Further, for any $M\subseteq [m]$ with
$M^c\in\mathcal{V}$, we defined $M_o\subseteq M$ to be the set of available
tasks. Let $OPT$ denote the optimal \emph{centralized} agent-subtask allocation
algorithm. In particular, for any subgame $\mathcal{G}_M$ (i.e.,wherein
solutions for all subtasks in knowledge-subgraph $M^c\in\mathcal{V}$ are
public), $OPT$ chooses the agent-subtask distribution matrix
$\{x^{OPT}_i(u)\}_{i\in[n],u\in M_o}$ that minimizes the expected makespan of
the remaining subtasks. Let $T_{OPT}(M)$ denote the expected time achieved by
$OPT$ in subgame $\mathcal{G}_M$, and for each subtask $u\in M_o$, let
$\lambda^{OPT}_u =\sum_{i \in [n]} x^{OPT}_i(u)a_i$. 

On the other hand, assuming the rewards-vector satisfies the conditions for PPS,
let $T_{PPS}(\mathcal{G}_M)$ denote the makespan achieved by some chosen
equilibrium state in subgame $\mathcal{G}_M$. Also, as we did for $OPT$, we
define $\{x^{eq}_i(u)\}_{i\in[n],u\in M_o}$ (and $\lambda^{eq}_u =\sum_{i \in
[n]} x^{eq}_i(u)a_i$) to be the agent-subtask distribution matrix for some
equilibrium state.

For any arbitrary work-conserving strategy $\{x_i(u)\}_{u\in M_o}$, with
corresponding $\lambda_u$, we claim that subgame $\mathcal{G}_{M}$ reduces to
the subgame $\mathcal{G}_{M-u}$ with probability $\frac{\lambda_us_u}{\sum_{u\in
M_o}\lambda_us_u}$, after an expected time of $\frac{1}{\sum_{u\in
M_o}\lambda_us_u}$. This follows from the SA model, and also standard properties
of the exponential distribution. Thus, we have:
\begin{align}
\label{eq:Topt}	
\EE[T_{OPT}(M)] &= \frac{1}{\sum_{u\in M_o}\lambda^{OPT}_us_u} + \sum_{u\in M_o}
\frac{\lambda^{OPT}_us_u}{\sum_{u\in M_o}\lambda^{OPT}_us_u} \EE[T_{OPT}(M-u)]\\
\label{eq:Tpps}	
\EE[T_{PPS}(\mathcal{G}_M)] &= \frac{1}{\sum_{u\in M_o}\lambda^{eq}_us_u} +
\sum_{u\in M_o} \frac{\lambda^{eq}_us_u}{\sum_{u\in M_o}\lambda^{eq}_us_u}
\EE[T_{PPS}(\mathcal{G}_{M-u})].
\end{align}
The first follows from linearity of expectation, and the property that $OPT$
chooses the optimal agent-subtask distribution in each subgame; the second
follows from the subgame-perfect property of the equilibrium.

For a given subgame-network with subtasks $M$, we define the \emph{herding-time}
$T_H(M)=\sum_{u\in M}\frac{1}{As_u}$, where $A=\sum_{i\in[n]}a_i$. We now claim
the following invariant: for every subgraph $M$ with $M^c\in\mathcal{V}$, we
have $\EE[T_{PPS}(\mathcal{G}_M)]=\EE[T_{OPT}(M)]=T_H(M)$. Since this is true
for all subgames $\mathcal{G}_M$, therefore we have that the price-of-anarchy is
$1$ for all acyclic subtask networks under the SA model.

We now prove the above invariant via induction. Let $\mathcal{S}_k
=\{\mathcal{G}_M :M^c\in\mathcal{V},|M|=k\}$ be the set of all valid subgames
with $k$ remaining subtasks, for $1\leq k \leq m$. For $k=1$, the invariant is
trivially, since the subgames in $S_{1}$ have only a single subtask left. Now
assume the invariant is true for $\fall M\in \mathcal{S}_l$, $l\leq k$. Consider
a subgame $\mathcal{M}\in\mathcal{S}_{k+1}$. Applying the induction hypothesis
to equations \ref{eq:Topt} and \ref{eq:Tpps}, we have $\fall u \in M_o,
\EE[T_{OPT}(M-u)] = \EE[T_{PPS}(\mathcal{G}_{M-u})]=T_H(M-u)$. Now from equation
\ref{eq:Topt}, we have for $T_{OPT}$:
\begin{align*}
\EE[T_{OPT}(M)] &= \frac{1}{\sum_{u\in M_o}\lambda^{OPT}_us_u }  
 + \sum_{u\in M_o} \left[ \frac{\lambda^{OPT}_us_u}{\sum_{u\in
M_o}\lambda^{OPT}_us_u} \EE[T_{H}(M-u)]\right]\\
&= \frac{1}{\sum_{u\in M_o}\lambda^{OPT}_us_u }+\sum_{u\in M_o}
\left[\frac{\lambda^{OPT}_us_u}{\sum_{u\in
M_o}\lambda^{OPT}_us_u}\left(\EE[T_{H}(M)]-\frac{1}{As_u} \right)\right].
\end{align*}
Simplifying this expression, we get:
\begin{align*}
\EE[T_{OPT}(M)]
= \EE[T_{H}(M)] + \frac{1}{\sum_{u\in M_o}\lambda^{OPT}_us_u } -
\frac{1}{\sum_{u\in M_o}\lambda^{OPT}_us_u} 
\left(\frac{\sum_{u\in M_o} \lambda^{OPT}_u}{A} \right) = \EE[T_{H}(M)].
\end{align*}
Similarly, we can do the same for $T_{PPS}$, using equation \ref{eq:Tpps} and
the induction hypothesis. Combining the two, we get the desired result.
\end{proof}

\section{Discussion}\label{sec:discussion}
As mentioned in Section \ref{sec:intro}, in a stylized model like ours, some phenomena may arise from modeling artifacts. We now briefly argue that our results are mostly insensitive to our assumptions.

We focus on a setting with \emph{compulsory participation} -- agents derive utility only by solving subtasks. A vast majority of research is done by professional researchers, committed to working on their chosen topics, which makes the act of research different from settings such as crowdsourcing \cite{chawla2012optimal}. Moreover, our results are not sensitive to the number of researchers engaged in a project -- the question of partial-progress sharing arises as soon as there is more than one participant.

We assume that the time an agent takes to solve a task has an \emph{exponential distribution}. Such an assumption is common when considering dynamics in games as it helps simplify the analysis (for example, see \cite{ghosh2013incentivizing}). We conjecture however that the equilibria in the treasure-hunt game remain qualitatively similar under a much broader class of distributions. One reason for this is that assuming exponentially distributed completion times is, in a sense, \emph{over-optimistic}. For an agent trying to
decide whether or not to share a breakthrough, assuming memoryless completion times essentially corresponds to assuming other agents are starting from scratch, discounting any time already invested in the problem. 

We assume that agents work on tasks independently. This allows us to focus on information sharing. However, as discussed in Section \ref{ssec:extensions}, our results extend to give conditions for the payoffs-vector resulting from everyone following PPS to be in the core. This might not work if coalitions have rates which are super-additive. Essentially, for PPS to be in the core, we only need the grand coalition to have the fastest rate among all coalitions (see Corollary \ref{cor:SA_core}).

Partial-progress sharing is one of several primitives required for enabling true collaborative research. However, there are very few good formal ways for reasoning about such settings. The model we develop for studying information sharing in collaborative research, though stylized, exhibits a variety of phenomena which correspond to observed behavior among researchers. It captures the uncertain dynamics of research, while retaining analytic tractability. Although we focus on understanding incentives for PPS, our model may prove
useful for understanding other related questions.
\section{Additional theorems}\label{appendix}
\begin{theorem}
\label{thm:LineDSIC}
Define $\forall u \in [m]$, $\beta(u) \triangleq \sum_{i \in L}a_i(u)$, where $L
\in [n]$ is the set of stackelberg players who
commit to PPS ex-ante.
Suppose the rewards-vector $\{R_u\}$ satisfies the following: for any agent $i
\in [n] \setminus L$ and for any pair of 
subtasks $u,v$ such that
$u$ precedes $v$, the rewards satisfy:
\begin{align*}
\frac{R_u \beta(u)}{R_va_{-i}(v)}\geq \frac{a_i(v)}{a_i(v)+a_{-i}(v)}, 
\end{align*}
where for any task $w$, we define $a_{-i}(w)\triangleq\sum_{j\neq i}a_j(w)$.
Then all agents following 
partial-progress sharing (PPS) is the unique Nash equilibrium.
\end{theorem}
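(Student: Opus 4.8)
Note first that since $\beta(u)=\sum_{j\in L}a_j(u)\le\sum_{j\neq i}a_j(u)=a_{-i}(u)$ for every $i\in[n]\setminus L$, the hypothesis here implies the hypothesis of Theorem~\ref{thm:LineNE}, so the all-PPS profile \emph{is} a Nash equilibrium; the work is to show it is the \emph{only} one. The plan is to prove the stronger statement that, once the agents in $L$ are committed to PPS, following PPS is a (weakly) dominant strategy for every agent $i\in[n]\setminus L$ in the residual game — i.e.\ a best response no matter what the remaining non-Stackelberg agents do. Given this, in any equilibrium every non-Stackelberg agent plays PPS and every Stackelberg agent plays PPS by commitment, so the all-PPS profile is the unique equilibrium (with the usual caveat that ruling out the indifferent deviations requires the deviation gap to be strict at some reachable subtask, which holds as soon as the displayed inequality is strict somewhere — otherwise one only gets uniqueness up to payoff-equivalence).

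\textbf{Induction setup.} I would reuse the nested-subgame scaffolding of Theorem~\ref{thm:LineNE} verbatim: number subtasks from the end, fix $i\in[n]\setminus L$, and let $\mathcal{G}_{k,l}^i$ ($k\le l$) be the subgame in which $i$ is at subtask $k$ while the Stackelberg agents are collectively at subtask $l$ (all at the same subtask because they follow PPS), so that $\{l,l-1,\dots,k+1\}$ are $i$'s captive subtasks; the remaining non-Stackelberg agents are carried along in arbitrary states, playing arbitrary strategies. The invariant to establish, by induction on subgames ordered ``smaller first'' exactly as in Theorem~\ref{thm:LineNE} ($k<p$, or $k=p$ and $l<q$), is: PPS is a best response for $i$ in every $\mathcal{G}_{k,l}^i$. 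As in Theorem~\ref{thm:LineNE}, argument $(*)$ lets me restrict to deviations in which $i$ picks a fixed delay $\tau>0$ and reveals her captive subtasks at time $\min\{\tau,\tau_1\}$, $\tau_1$ being the first breakthrough time, so the inductive step is a comparison of $U^i_{k,l}(0)$ with $U^i_{k,l}(\tau)$; the base case ($i$ knows all solutions, or all agents at the last subtask) and the ``no captive subtask'' case $k=l$ are identical to Theorem~\ref{thm:LineNE}.

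\textbf{The key step: $a_{-i}(l)\rightsquigarrow\beta(l)$.} While $i$ withholds, the public frontier can move past subtask $l$ only when some agent who is actually working on $l$ solves it \emph{and} shares it: the Stackelberg agents contribute rate $\beta(l)$ for certain, any non-Stackelberg agent still at $l$ who happens to share contributes more, and a non-Stackelberg agent who has raced ahead is not working on $l$ and does not advance the frontier at all. Hence the rate at which $i$ would lose the captive reward $R_l$ is at least $\beta(l)$, and since a \emph{larger} such rate only makes immediate sharing more attractive, the worst case for PPS is rate exactly $\beta(l)$ (all non-Stackelberg agents other than $i$ withholding). Pushing the same algebra as in Theorem~\ref{thm:LineNE} through this worst case yields
\begin{align*}
U^i_{k,l}(0)-U^i_{k,l}(\tau)\;\ge\;\frac{\PP[\tau_1<\tau]}{a_i(k)+\beta(l)}\left(R_l\,\beta(l)-R_k\,\frac{a_{-i}(k)\,a_i(k)}{a_i(k)+a_{-i}(k)}\right),
\end{align*}
and the hypothesis of the theorem, applied with $u=l$ preceding $v=k$, is exactly $R_l\beta(l)\ge R_k\,a_{-i}(k)\,a_i(k)/(a_i(k)+a_{-i}(k))$, so the right-hand side is $\ge 0$ and PPS is a best response. (The second term keeps $a_{-i}(k)$ rather than $\beta(k)$ because it comes from the continuation in which $i$'s extra subtask is eventually revealed and all $n$ agents race for it.)

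\textbf{Main obstacle.} The delicate point is justifying the displayed inequality without assuming anything about the other non-Stackelberg agents: when they do not follow PPS they can themselves accumulate hidden leads, so the state is no longer captured by the pair $(k,l)$ and the continuation values appearing on the deviation branch no longer have the clean closed form of Theorem~\ref{thm:LineNE}. I would handle this exactly as in the proof of Theorem~\ref{thm:DAGDSIC}: carry the other non-Stackelberg agents along explicitly (it suffices to track a single such agent $j$ at an arbitrary subtask $w$, the many-agent case being structurally identical), recompute $U^i_{k,l}(0)-U^i_{k,l}(\tau)$ with the extra transitions corresponding to $j$'s breakthroughs, and check that $j$'s contributions enter only through terms that are dominated — by applying the theorem's condition to the relevant precedence pairs along the line — so that $j$'s presence can only lower $i$'s payoff and can never flip the sign of the PPS-versus-deviation gap. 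Once ``PPS is a best response in every subgame'' is in hand, invoking the strict form of the hypothesis gives a strictly positive deviation gap, hence no non-Stackelberg agent can withhold in any equilibrium, which is the claimed uniqueness.
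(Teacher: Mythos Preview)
Your key algebraic move — replacing $a_{-i}(l)$ by $\beta(l)$ on the loss side of the Theorem~\ref{thm:LineNE} computation, so that the inequality needed at the inductive step becomes $R_l\beta(l)\ge R_k\,a_{-i}(k)a_i(k)/(a_i(k)+a_{-i}(k))$ — is exactly the heart of the paper's argument, and your explanation of why the gain term keeps the full $a_{-i}(k)$ is correct.

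Where you diverge from the paper is in the treatment of the other non-Stackelberg agents. You aim at \emph{weak dominance} (PPS best for $i$ against arbitrary strategies of the remaining $j\in[n]\setminus(L\cup\{i\})$), and then have to confront the obstacle that those agents' hidden leads blow up the state description. The paper instead proves \emph{uniqueness of subgame-perfect equilibrium} via a \emph{joint} backward induction over all non-Stackelberg agents: the invariant is that PPS is a best response for \emph{every} $j\in[n]$ in every $\mathcal{G}^j_{k,l}$, and the subgame ordering carries an extra cross-agent clause ($\mathcal{G}^j_{p,q}$ smaller than $\mathcal{G}^i_{k,l}$ when $j\ne i$ and $q<l$). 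The payoff of this device is that, at the inductive step for $\mathcal{G}^i_{k,l}$, one may invoke the invariant for the other $j$'s and conclude that, in any equilibrium, no non-Stackelberg agent is strictly ahead of $i$. This collapses the state back to something tractable: the other non-Stackelberg agents sit somewhere in $\{k,\dots,l\}$, and their aggregate effect on the deviation branch is packaged into transition probabilities $p_{l'}$ for the public frontier jumping from $l$ to $l'$. The paper then bounds $U^{i,\mathrm{dev}}_{k,l}$ uniformly in these $p_{l'}$ using the monotonicity $U^i_{k,l'}\le U^i_{k,l-1}<U^i_{k-1,l}$, after which the final line is exactly your displayed inequality.

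Your proposed resolution of the obstacle — carry an explicit $j$ at an arbitrary subtask $w$ and check its contributions are dominated — may well go through, but as written it is a plan rather than an argument, and it targets a stronger statement than is needed. The paper's joint-induction device is precisely what restores the clean continuation values you flag as missing; it does so without dominance, because the equilibrium hypothesis together with the cross-agent inductive hypothesis already disciplines the other non-Stackelberg agents for you.
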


\begin{proof}
We number the subtasks from the end, with the last subtask being denoted as $1$,
and the first as $m$. Fix an agent $i\in[n]$ and define $\mathcal{G}_{k,l}^i,
0\leq k\leq l\leq
m,$ to be the subgame where agent $i$ starts at subtask $k$, and the
Stackelberg 
agents $L$ start at subtask $l$. And we formulate the
following invariant: under the conditions specified in the theorem, PPS is a
best response for every agent $j \in [n]$  in \emph{every} subgame
$\mathcal{G}_{k,l}^j, 1\leq
k\leq l\leq m$. 

Given two subgames $\mathcal{G}_{p,q}^j$
and $\mathcal{G}_{k,l}^i$, we say that $\mathcal{G}_{p,q}^j$ is smaller than
$\mathcal{G}_{k,l}^i$ if:
\begin{itemize}
 \item $i=j ~\land~ (p<k ~\vee~ (p=k~ \land~ q<l))$, or 
 \item $i \neq j ~\land~ q<l $.
\end{itemize}

Given a subgame $\mathcal{G}_{k,l}^i$,
assuming the above-mentioned invariant is true for all smaller subgames, agent
$i$ is certain 
that there is no other agent ahead of her. For, if there were an agent $j$ ahead
of her at subtask
$r<k$, agent $j$ would follow PPS by virtue of being in a smaller game.

And so, the expected reward earned by agent $i$ by following PPS in the subgame
$\mathcal{G}_{k,l}^i$ is given by:
\begin{align*}
U_{k,l}^i=\sum_{u=k+1}^{l}R_u+\sum_{u=1}^{k}R_u.\left(\frac{a_i(u)}
{a_i(u)+a_{-i}(u)}\right),	
\end{align*}
where $a_{-i}(u)=\sum_{j\in[n]\setminus\{i\}}a_j(u)$. 

As discussed in the proof of Theorem \ref{thm:DAG}, given a subgame
$\mathcal{G}_{k,l}^i$, we 
need only show that, for agent $i$ following PPS dominates one step deviations,
i.e., waiting
to progress to subtask $k-1$ before sharing any progress (unless, as we have by
the invariant, some
other agent who is behind declares some progress).

Assume that agent $i$ is deviating as explained above. For any strategy profile
of agents in 
$[n]\setminus(L \cup \{i\})$, let $p_{k-1}$ and $p_{l^\prime}$ denote the
probabilities that 
$\mathcal{G}_{k,l}^i$ devolves to  $\mathcal{G}_{k-1,k-1}^i$ and $\mathcal{G}_{k,l^\prime}^i$, where $k \leq
l^\prime < l$, when any
of the above agents declares progress upto a subgame between $l$ and $k-1$.
Since the invariant
dictates that $i$ will follow PPS in each of these subgames, the game
transitions to $\mathcal{G}_{k,k}^j$ for all
$j \in [n]$. And the first agent $j ^ \prime$ to make progress will move to
$\mathcal{G}_{k-1,k}^{j^\prime}$, and by the
invariant will follow PPS. So effectively, assuming the invariant ensures that
every agent follows PPS in
the ensuing subgame. So the reward gained by agent $i$ in $\mathcal{G}_{k,l^\prime}^i$ is:
\begin{align} \label{eqn:lineDSIC1}
U_{k,l^\prime}^i=\sum_{u=k+1}^{l^\prime}R_u+\sum_{u=1}^{k}R_u.\left(\frac{a_i(u)
}
{a_i(u)+a_{-i}(u)}\right) \leq U_{k,l-1}^i < U_{k-1,l}^i.	
\end{align}
And similarly, 
\begin{align} \label{eqn:lineDSIC2}
U_{k-1,k-1}^i < U_{k,l-1}^i.  
\end{align}

Summing up, since the Stackelberg agents $L$ are on subtask $l$ in $\mathcal{G}_{k,l}^i$, we have
that $\mathcal{G}_{k,l}^i$ transitions to:
\begin{itemize}
 \item $\mathcal{G}_{k-1,k-1}^i$ with probability $p_{k-1}$,
 \item $\mathcal{G}_{k-1,l}^i$ with probability  $\frac{a_i(k)}{\beta(l)+a_i(k)}(1-p)$, 
 \item $\mathcal{G}_{k,l^\prime}^i$ with probability  $p_{l^\prime}$, for $k \leq l^\prime < l-1$, and
 \item $\mathcal{G}_{k,l-1}^i$ with probability $p_{l-1}+\frac{\beta(l)}{\beta(l)+a_i(k)}(1-p)$, 
\end{itemize}
where $p \triangleq \sum_{k-1 \leq l^\prime < l}p_{l^\prime}$. 

From the above arguments, and using equations
(\ref{eqn:lineDSIC1},\ref{eqn:lineDSIC2}) we have that:
\begin{align*}
U_{k,l}^{i,dev} &=  p_{k-1}U_{k-1,k-1}^i +
\frac{a_i(k)}{\beta(l)+a_i(k)}(1-p)U_{k-1,l}^i +
\sum_{k \leq l^\prime<l-1}p_{l^\prime}U_{k,l^\prime}^i
+ \left( p_{l}+\frac{\beta(l)}{\beta(l)+a_i(k)}(1-p) \right) U_{k,l-1}^i, \\
&\leq  p U_{k,l-1}^i + (1-p)\left( \frac{a_i(k)}{\beta(l)+a_i(k)}U_{k-1,l}^i +
\frac{\beta(l)}{\beta(l)+a_i(k)}U_{k,l-1}^i \right), \\
&\leq \frac{a_i(k)}{\beta(l)+a_i(k)}U_{k-1,l}^i +
\frac{\beta(l)}{\beta(l)+a_i(k)}U_{k,l-1}^i, \\
&= \frac{a_i(k)}{\beta(l)+a_i(k)}\left (U_{k,l}^i + R_k -
\frac{a_i(k)}{a_i(k)+a_{-i}(k)}R_k \right) 
+ \frac{\beta(l)}{\beta(l)+a_i(k)}\left( U_{k,l}^i - R_l \right), \\
&= U_{k,l}^i + \frac{a_i(k)}{\beta(l)+a_i(k)}
\frac{a_{-i}(k)}{a_i(k)+a_{-i}(k)}R_k - \frac{\beta(l)}{\beta(l)+a_i(k)}R_l.
\end{align*}

Subtracting this from the expected utility under PPS, we get:
\begin{align*}
 U_{k,l}^{i} - U_{k,l}^{i,dev} &\geq \frac{\beta(l)}{\beta(l)+a_i(k)}R_l - 
 \frac{a_i(k)}{\beta(l)+a_i(k)} \frac{a_{-i}(k)}{a_i(k)+a_{-i}(k)}R_k \\
 &= \frac{a_{-i}(k)R_k}{\beta(l)+a_i(k)}\left( \frac{\beta(l)R_l}{a_{-i}(k)R_k}
- \frac{a_i(k)}{a_i(k)+a_{-i}(k)} \right) \geq 0
\end{align*}
The proof follows from a backward induction argument, since the stated invariant
is true for all subgames of the form $\mathcal{G}_{0,l}^i$ (and
$\mathcal{G}_{1,1}^i$) for all $i\in [n]$.
\end{proof}

\begin{theorem}\label{thm:DAGCore}
Consider the Treasure-hunt game on a directed acyclic subtask-network $G$, and a
general agent-subtask aptitude matrix $\{a_i(u)\}$. Suppose the rewards-vector
$\{R_u\}$ satisfies the following conditions:
\begin{itemize}[nolistsep,noitemsep]
 \item[$\bullet$] (Monotonicity) There exists a total ordering $\prec$ on
subtasks $[m]$ such that for every agent $i$ and pair of subtasks $u,v$, we have
$v\prec u\iff R_v a_i(v) < R_u a_i(u)$. 
 \item[$\bullet$] For all pairs of subtasks $(u,v)$ such that $v$ is reachable
from $u$ in $G$, we have $v\prec u$.
\end{itemize}
Then the following strategies together constitute a Nash equilibrium:
\begin{itemize}[nolistsep,noitemsep]
\item[$\bullet$] Every agent implements the \emph{PPS} policy.
\item[$\bullet$] At any time, if $M_o\subseteq [m]$ is the set of available
subtasks, then every agent $i$ chooses to work on the unique subtask
$u^*=\argmax_{u\in M_o} R_u a_i(u)$
\end{itemize}
More importantly, the corresponding payoffs are in the core.
\end{theorem}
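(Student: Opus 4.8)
The plan is to reduce the equilibrium claim to Theorem~\ref{thm:DAG} and then mimic the coalitional argument of Theorem~\ref{thm:LineCore} for the core. The pivotal observation is that the virtual reward can be written symmetrically as $\gamma_i(u)=h\bigl(R_ua_i(u),\,R_ua_{-i}(u)\bigr)$ with $h(P,Q)=\tfrac{PQ}{P+Q}$: substituting $a_i(u)=P/R_u$ and $a_{-i}(u)=Q/R_u$ into Definition~\ref{def:and_gamma_lambda} recovers $\gamma_i(u)=PQ/(P+Q)$ exactly. Since $h$ is strictly increasing in each coordinate, and since the hypothesis of Theorem~\ref{thm:DAGCore} furnishes a \emph{single} total order $\prec$ with $v\prec u\iff R_va_j(v)<R_ua_j(u)$ for \emph{every} agent $j$, summing the right-hand inequality over $j\ne i$ gives $R_va_{-i}(v)<R_ua_{-i}(u)$, whence $v\prec u\iff\gamma_i(v)<\gamma_i(u)$. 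So the very same $\prec$ verifies the monotonicity and reachability hypotheses of Theorem~\ref{thm:DAG}; moreover on any set $M_o$ of available subtasks the $\prec$-maximal element simultaneously maximizes $R_ua_i(u)$ and $R_ua_{-i}(u)$, hence $\gamma_i(u)$, so $\argmax_{u\in M_o}R_ua_i(u)=\argmax_{u\in M_o}\gamma_i(u)$. Thus the stated profile (PPS plus herding on $\argmax R_ua_i(u)$) is exactly the equilibrium produced by Theorem~\ref{thm:DAG}, and the value to agent $i$ of any subgame $\G_{M,T}^i$ under it is $U^i_{M,T}=\sum_{u\in T^c\setminus M^c}R_u+\sum_{u\in T}\tfrac{a_i(u)}{a(u)}R_u$.

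For the core I would fix a coalition $C\subset[n]$ and treat it as a single super-agent with aptitude $a_C(u)=\sum_{j\in C}a_j(u)$: since members of $C$ share solutions and pool rewards internally, their joint play is that of one agent facing the agents in $[n]\setminus C$, each of whom follows PPS plus herding and so all concentrate on the common $\prec$-maximal available subtask $u^*$. I would then re-run the backward induction of Theorem~\ref{thm:DAG} over the nested pairs in $\mathcal{S}$ with $C$ in the role of the deviator. The only deviations are delaying the public release of captive subtasks or concentrating on a non-maximal available subtask $v$; the same transition-probability bookkeeping yields $U^C_{M,T}-U^{C,\mathrm{dev}}_{M,T}=\frac{\gamma_C(u^*)-\gamma_C(v)}{a_C(v)+a_{-C}(u^*)}$, where $\gamma_C(u)=h\bigl(R_ua_C(u),R_ua_{-C}(u)\bigr)$. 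Because $u^*$ is $\prec$-maximal among available subtasks it maximizes $R_ua_j(u)$ for every $j$, hence maximizes $R_ua_C(u)$ and $R_ua_{-C}(u)$, hence $\gamma_C$; and the reachability condition, exactly as in Theorem~\ref{thm:DAG}, disposes of a deviation $v$ reachable from some available $w$ via $\gamma_C(u^*)\ge\gamma_C(w)\ge\gamma_C(v)$ (monotonicity of $h$ applied to $w\succ v$). So PPS plus herding is a best response for $C$ as well.

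Evaluating $C$'s payoff in this best response at the root subgame $\G_{[m],[m]}^C$ gives $v(C)=\sum_{u\in[m]}\tfrac{a_C(u)}{a(u)}R_u=\sum_{j\in C}\sum_{u\in[m]}\tfrac{a_j(u)}{a(u)}R_u=\sum_{j\in C}u_j$, where $u_j=\sum_{u}\tfrac{a_j(u)}{a(u)}R_u$ is agent $j$'s payoff in the PPS equilibrium. Hence no coalition can improve on the sum of its members' equilibrium payoffs while the others follow PPS, and since the grand coalition collects $\sum_j u_j=\sum_uR_u$ the allocation is efficient; therefore $(u_j)_{j\in[n]}$ is in the core.

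I expect the main work here to be bookkeeping rather than ideas: checking that a coalition's best-response problem really is isomorphic to a single agent's best-response problem in the same game — in particular that the ``captive subtasks'' of a coalition transition correctly when the withholding party shares internally but not publicly — so that the induction of Theorem~\ref{thm:DAG} transfers verbatim with $a_i(\cdot)$ replaced by $a_C(\cdot)$. The single genuinely new ingredient, the representation of $\gamma$ as a coordinatewise-increasing function of the two reward-weighted aptitudes $R_ua_i(u)$ and $R_ua_{-i}(u)$, is what makes both the reduction to Theorem~\ref{thm:DAG} and the passage to coalitions go through, and I would isolate it as a short lemma at the outset.
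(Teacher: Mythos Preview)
Your proposal is correct and follows essentially the same approach as the paper: reduce the Nash-equilibrium claim to Theorem~\ref{thm:DAG} by showing that the hypothesis $v\prec u\iff R_va_j(v)<R_ua_j(u)$ for all $j$ implies the $\gamma$-monotonicity of Theorem~\ref{thm:DAG}, then treat a coalition $C$ as a single super-agent with aptitude $a_C(\cdot)$, verify the same $\gamma$-monotonicity for $C$, and compute $v(C)=\sum_{j\in C}u_j$. Your explicit representation $\gamma_i(u)=h(R_ua_i(u),R_ua_{-i}(u))$ with $h$ coordinatewise increasing is a clean way to package what the paper does via the implication chain $R_va_i(v)<R_ua_i(u)\Rightarrow R_va_C(v)<R_ua_C(u)\text{ and }R_va_{-C}(v)<R_ua_{-C}(u)\Rightarrow\gamma_C(v)<\gamma_C(u)$.

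One simplification: you propose to re-run the backward induction of Theorem~\ref{thm:DAG} with $C$ as the deviator, but this is unnecessary. Once you have established that the game with $C$ as a single player (facing the individual agents in $[n]\setminus C$) satisfies the hypotheses of Theorem~\ref{thm:DAG}---which follows from your $h$-lemma applied to both $C$ and each $j\notin C$---you may invoke Theorem~\ref{thm:DAG} as a black box to conclude that PPS plus herding is a best response for $C$. The paper does exactly this, and it spares you the bookkeeping you anticipate about coalition-level captive subtasks.
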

\begin{proof}
 Clearly, the condition on rewards required above is a special case of the more general condition 
 in Theorem \ref{thm:DAG}. Therefore, ``PPS + herding'' constitutes a Nash equilibrium.
 For each agent $i \in [n]$, the corresponding payoff is given by:
 $u_i \triangleq \underset{u \in [m]}\sum \frac{a_i(u)}{a_i(u)+a_{-i}(u)}R_u$.
 
 Given a coalition $C \subset [n]$, let $\nu(C)$ denote the value of the game for $C$. 
 Define $\forall u \in [m]$, $a_C(u) \triangleq \sum_{i \in C}a_i(u)$ and $a_{-C}(u)\triangleq \sum_{i \in [n] \setminus C}a_i(u)$.  
 By the standard properties of exponential random variables,
 any strategy profile of agents in $C$, i.e., which task each agent works on, can be thought of a mixed strategy of a
 single player of ability $\{a_C(u)\}_{u \in [m]}$. 
 
 Assume every agent in $[n] \setminus C$ does ``PPS + herding''. By the monotonicity condition, we have,
 \begin{align*}
  R_v a_i(v) < R_u a_i(u) &\implies R_v a_C(v) < R_u a_C(u) \mbox{ and } R_v a_{-C}(v) < R_u a_{-C}(u) \\
  &\implies  \frac{R_v a_C(v) a_{-C}(v)}{a_C(v) + a_{-C}(v)} < \frac{R_u a_C(u) a_{-C}(u)}{a_C(u) + a_{-C}(u)}.
 \end{align*}
By Theorem \ref{thm:DAG}, ``PPS + herding'' is a best response for $C$. 
The payoff obtained by $C$ in this scenario is 
\begin{align*}
 \underset{u \in [m]}\sum \frac{a_C(u)}{a_C(u)+a_{-C}(u)}R_u = 
\sum_{i \in C}\underset{u \in [m]}\sum \frac{a_i(u)}{a_i(u)+a_{-i}(u)}R_u = \sum_{i \in C}u_i.
\end{align*}
Therefore, $\nu(C) \leq \sum_{i \in C}u_i$, and the result follows.
\end{proof}
 
\section*{Acknowledgments}
 The authors were supported in part by the DARPA GRAPHS program and the DARPA XDATA program, via grant FA9550-12-1-0411 from the U.S. Air Force Office of Scientific Research (AFOSR) and the Defense Advanced Research Projects Agency (DARPA).

We thank Ramesh Johari and other members of the SOAL Lab for their very useful suggestions.

\end{doublespacing}


\end{document}